%% LyX 2.1.4 created this file.  For more info, see http://www.lyx.org/.
%% Do not edit unless you really know what you are doing.
\documentclass[english]{revtex4-1}
\usepackage[T1]{fontenc}
\usepackage[latin9]{inputenc}
\usepackage{geometry}
\geometry{verbose,tmargin=3cm,bmargin=3cm,lmargin=3cm,rmargin=3cm}
\setcounter{secnumdepth}{3}
\usepackage{amsmath}
\usepackage{amsthm}
\usepackage{amssymb}
\usepackage{esint}

\makeatletter
%%%%%%%%%%%%%%%%%%%%%%%%%%%%%% Textclass specific LaTeX commands.
\numberwithin{equation}{section}
\numberwithin{figure}{section}
\theoremstyle{plain}
\newtheorem{thm}{\protect\theoremname}
  \theoremstyle{plain}
  \newtheorem{lem}[thm]{\protect\lemmaname}
  \theoremstyle{plain}
  \newtheorem{prop}[thm]{\protect\propositionname}
  \theoremstyle{definition}
  \newtheorem{example}[thm]{\protect\examplename}

\makeatother

\usepackage{babel}
  \providecommand{\examplename}{Example}
  \providecommand{\lemmaname}{Lemma}
  \providecommand{\propositionname}{Proposition}
\providecommand{\theoremname}{Theorem}

\begin{document}

\title{Euler Polynomials and Identities for Non-Commutative Operators }

\author{Valerio De Angelis and Christophe Vignat }

\address{Department of Mathematics,  Xavier University of Louisiana, New Orleans, 
Department of Mathematics, Tulane University, New Orleans, \\ L.S.S. Supelec Universit\'{e} Paris Sud Orsay}
\begin{abstract}
Three kinds of identities involving non-commutating operators and
Euler and Bernoulli polynomials are studied. The first identity, as
given in \cite{Bender} by Bender and Bettencourt, expresses the nested
commutator of the Hamiltonian and momentum operators as the commutator
of the momentum and the shifted Euler polynomial of the Hamiltonian.
The second one, due to J.-C. Pain \cite{Pain}, links the commutators
and anti-commutators of the monomials of the position and momentum
operators. The third appears in a work by Figuieira de Morisson and
Fring \cite{Figueira} in the context of non-Hermitian Hamiltonian
systems. In each case, we provide several proofs and extensions of
these identities that highlight the role of Euler and Bernoulli polynomials.
\end{abstract}
\maketitle

\section{Introduction}

Special functions appear as a natural tool in many areas of theoretical
physics; in quantum physics, Hermite polynomials are the natural basis
to describe the behavior of the quantum harmonic oscillator. The Euler
and Bernoulli polynomials are ubiquitous in number theory and combinatorics;
in physics, they appear in various ways, such as in their association
with zeta functions and in their rich interplay with particle physics;
see \cite{Elizalde} for examples such as the Casimir effect and string
theory.

Bernoulli and Euler polynomials also appear in the field of non-commutative
operators in quantum physics, in more subtle ways; for example, the
Bernoulli numbers appear in the  the linear term of the celebrated
Baker-Campbell-Hausdorff formula. Their role in quantum algebras is
also detailed in \cite{Hodges}. This paper describes three distinct
contexts of quantum physics in which the Bernoulli and Euler polynomials
play an important role.

Define the \textit{commutator} of two operators $p$ and $q$ as
\[
\left[p,q\right]=pq-qp
\]
and their \textit{anti-commutator} as
\[
\left\{ p,q\right\} =pq+qp.
\]

Define moreover the \textit{nested (or iterated) commutators}
\[
\left[p,q\right]_{2}=\left[\left[p,q\right],q\right],\thinspace\thinspace\left[p,q\right]_{3}=\left[\left[\left[p,q\right],q\right],q\right]=\left[\left[p,q\right]_{2},q\right]
\]
and more generally
\[
\left[p,q\right]_{n}=\left[\left[p,q\right]_{n-1},q\right],\thinspace\thinspace n\ge1
\]
and accordingly the \textit{nested anti-commutators} as
\[
\left\{ p,q\right\} _{2}=\left\{ \left\{ p,q\right\} ,q\right\} ,\thinspace\thinspace\left\{ p,q\right\} _{3}=\left\{ \left\{ \left\{ p,q\right\} ,q\right\} ,q\right\} =\left\{ \left\{ p,q\right\} _{2},q\right\} 
\]
and more generally
\[
\left\{ p,q\right\} _{n}=\left\{ \left\{ p,q\right\} _{n-1},q\right\} ,\thinspace\thinspace n\ge1.
\]
For example
\[
\left\{ p,q\right\} _{2}=\left\{ pq+qp,q\right\} =pq^{2}+2qpq+q^{2}p
\]
and
\[
\left[p,q\right]_{2}=\left[pq-qp,q\right]=pq^{2}-2qpq+q^{2}p.
\]
It can be checked by induction on $n$ that the general cases are
\[
\left[p,q\right]_{n}=\sum_{k=0}^{n}\binom{n}{k}\left(-1\right)^{k}q^{k}pq^{n-k}
\]
and 
\[
\left\{ p,q\right\} _{n}=\sum_{k=0}^{n}\binom{n}{k}q^{k}pq^{n-k};
\]
as a consequence, the exponential generating functions of these two
sequences are
\[
\sum_{n\ge0}\left[p,q\right]_{n}\frac{z^{n}}{n!}=e^{-zq}pe^{zq},\thinspace\thinspace\sum_{n\ge0}\left\{ p,q\right\} _{n}\frac{z^{n}}{n!}=e^{zq}pe^{zq}.
\]
Note that $\left[p,q\right]=\left[p,q\right]_{1}$ and $\left\{ p,q\right\} =\left\{ p,q\right\} _{1}.$

In the sequel, we will consider the more particular case where the
non commutative operators $p$ and $q$ satisfy the identity
\begin{equation}
\left[q,p\right]=qp-pq=\imath\label{eq:noncom}
\end{equation}
where $\imath^{2}=-1.$ A more rigorous notation would be
\[
qp-pq=\imath I
\]
where $I$ is the identity operator, but in the absence of ambiguity,
we will systematically use the simplified notation (\ref{eq:noncom}).
For example, take $q=\imath\frac{d}{dx}$ and $p=x$ in the sense
that $q$ and $p$ act on any differentiable function $f$ as follows
\[
qf\left(x\right)=\imath\frac{d}{dx}f\left(x\right)=\imath f'\left(x\right)
\]
and
\[
pf\left(x\right)=xf\left(x\right).
\]
Then $\left(qp-pq\right)f=\imath f$ so that indeed
\[
qp-pq=\imath,
\]
in the sense that, for any differentiable function $f,$
\[
\left(qp-pq\right)f=\imath f.
\]
However, we wish to prove the results stated in the next sections
in the general case of two arbitrary operators $p$ and $q$, assuming
only that $p$ and $q$ satisfy (\ref{eq:noncom}). 

Let us introduce moreover another operator, the \textit{Hamiltonian
operator} $H$, as
\begin{equation}
H=\frac{1}{2}\left(p^{2}+q^{2}\right).\label{eq:Hamiltonian}
\end{equation}

Our aim in this paper is to study three types of identities on specific
commutators and anti-commutators; the first one was studied by Bender
and Bettencourt \cite{Bender}, and links the $n-$nested anti-commutator
of the operators $q$ and $H$ to a simpler anti-commutator of $q$
and a polynomial version of $H.$ This result was not proved, but
only inferred by these authors. Only recently, a proof was provided
in \cite{Galapon}, based on the integral representation of operators,
more precisely the generalized Weyl transform. Our approach consists
in first transforming Bender and Bettencourt's identity into a simpler
form, and then in providing two different proofs, one algebraic and
another based on the properties of some operators. A third, purely
analytic proof, is obtained by realizing the considered operators
and looking at their action on a sufficiently large set of functions.

The second identity studied in this paper was introduced recently
by Pain \cite{Pain}: it expresses the commutator of monomials in
operators $p$ and $q$ as a linear combination of their anti-commutators.
We show that the use of generating functions not only gives a simple
proof of this result, but also allows to derive the converse identity,
expressing the anti-commutator of monomials of $p$ and $q$ as a
linear combination of their commutators. In both cases, the Euler
and Bernoulli polynomials play an important role.

The third identity appears in the context of non-Hermitian Hamiltonian
systems, as studied by Figuieira de Morisson and Fring in \cite{Figueira}:
the real and imaginary part of the non-Hermitian Hamiltonian can be
shown, under assumptions that will be detailed later, to be related
by a linear identity involving Euler numbers.

One of the main intents of this paper is to show that for a given
identity on non-commutating operators, a variety of different proofs
- each pertaining to a different area of mathematics - can be exhibited.
On the way of each of these proofs, some interesting results, of either
combinatorial, algebraic or analytical nature, may appear.

\section{An Identity by Bender and Bettencourt}

\subsection{Introduction}

In \cite{Bender}, C. Bender and L. Bettencourt inspect the first
values of the operator
\[
\frac{1}{2^{n}}\left\{ q,H\right\} _{n};
\]
for $n=0,1,2,3,4,5$ and $6,$ they find respectively
\[
1,\thinspace\thinspace\frac{1}{2}\left\{ q,H\right\} ,\thinspace\thinspace\frac{1}{2}\left\{ q,H^{2}-\frac{1}{4}\right\} ,\thinspace\thinspace\frac{1}{2}\left\{ q,H^{3}-\frac{3}{4}H\right\} 
\]
\[
\frac{1}{2}\left\{ q,H^{4}-\frac{3}{2}H^{2}+\frac{5}{16}\right\} ,\thinspace\thinspace\frac{1}{2}\left\{ q,H^{5}-\frac{5}{2}H^{3}+\frac{25}{16}H\right\} 
\]
and
\[
\frac{1}{2}\left\{ q,H^{6}-\frac{15}{4}H^{4}+\frac{75}{16}H^{2}-\frac{61}{64}\right\} .
\]
This suggests the following result 
\begin{equation}
\frac{1}{2^{n}}\left\{ q,H\right\} _{n}=\frac{1}{2}\left\{ q,E_{n}\left(H+\frac{1}{2}\right)\right\} \label{eq:Benrder1}
\end{equation}
where, by a real \textit{tour de force}, the authors identify $E_{n}\left(x\right)$
as the Euler polynomial of degree $n,$ defined by the exponential
generating function
\begin{equation}
\sum_{n\ge0}\frac{E_{n}\left(x\right)}{n!}z^{n}=\frac{2e^{zx}}{e^{z}+1},\thinspace\thinspace\thinspace\vert z\vert<2\pi.\label{eq:EulerGF}
\end{equation}
Since this identity is only inferred in \cite{Bender}, we propose
to prove it in several ways using algebraic or analytic methods.

\subsection{Preprocessing}

As a first step of the proof, we transform the identity (\ref{eq:Benrder1})
into the equivalent form
\begin{equation}
\frac{1}{2^{n}}\left\{ q,H-\frac{1}{2}\right\} _{n}=\frac{1}{2}\left\{ q,E_{n}\left(H\right)\right\} .\label{eq:equivalent}
\end{equation}

Next, we consider the following lemma.
\begin{lem}
\label{lem:lemma 1}A polynomial $P_{n}\left(x\right)$ satisfies
\begin{equation}
\frac{1}{2}P_{n}\left(x\right)+\frac{1}{2}P_{n}\left(x+1\right)=x^{n}\label{eq:Euler}
\end{equation}
if and only if
\[
P_{n}\left(x\right)=E_{n}\left(x\right),
\]
the Euler polynomial of degree $n.$
\begin{proof}
The fact that the Euler polynomials satisfy 
\begin{equation}
E_{n}\left(x\right)+E_{n}\left(x+1\right)=2x^{n},\thinspace\thinspace\thinspace n\in\mathbb{N},\label{eq:Eulermean}
\end{equation}
can be deduced from the generating function (\ref{eq:EulerGF}). The
fact that this is an equivalence can be checked as follows: starting
from
\[
E_{n}\left(x\right)+E_{n}\left(x+1\right)=P_{n}\left(x\right)+P_{n}\left(x+1\right)
\]
which shows that $P_{n}\left(x\right)$ is of degree $n$, and writing
\[
P_{n}\left(x\right)=a_{n}^{\left(n\right)}x^{n}+\dots
\]
then, taking $x\to+\infty$ shows that $a_{n}^{\left(n\right)}$ is
also the leading term of $E_{n}\left(x\right).$ Then applying the
same argument to the new polynomial
\[
P_{n}\left(x\right)-a_{n}^{\left(n\right)}x^{n}
\]
shows by induction on $n$ that both polynomials $P_{n}$ and $E_{n}$
coincide.

Another proof of this Lemma uses the integral representation of Euler
polynomials that can be found as formula 24.7.6 in \cite{Nist},
\begin{equation}
E_{n}\left(z\right)=\int_{-\infty}^{+\infty}\left(\imath u-\frac{1}{2}+z\right)^{n}sech\left(\pi u\right)du.\label{eq:Euler integral}
\end{equation}
Start expanding
\[
P_{n}\left(x\right)=\sum_{k=0}^{n}p_{k}^{\left(n\right)}x^{k}
\]
and rewrite (\ref{eq:Euler}) as
\[
\frac{1}{2}\sum_{k=0}^{n}p_{k}^{\left(n\right)}x^{k}+\frac{1}{2}\sum_{k=0}^{n}p_{k}^{\left(n\right)}\left(x+1\right)^{k}=x^{n}.
\]
Substitute $x\to\imath u-\frac{1}{2}+x$ in both sides and integrate
again the function $sech\left(\pi u\right),$ giving
\[
\frac{1}{2}\sum_{k=0}^{n}p_{k}^{\left(n\right)}E_{k}\left(x\right)+\frac{1}{2}\sum_{k=0}^{n}p_{k}^{\left(n\right)}E_{k}\left(x+1\right)=E_{n}\left(x\right).
\]
Using the property (\ref{eq:Eulermean}) of Euler polynomials, deduce
\[
\sum_{k=0}^{n}p_{k}^{\left(n\right)}x^{k}=P_{n}\left(x\right)=E_{n}\left(x\right)
\]

\end{proof}
\end{lem}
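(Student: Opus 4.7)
My plan is to split the biconditional into the two standard directions, treating existence via the generating function and uniqueness via a short difference argument.

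For the forward direction I would use the generating function (\ref{eq:EulerGF}) directly. Computing
\[
\sum_{n\ge 0}\frac{E_n(x)+E_n(x+1)}{n!}z^n=\frac{2e^{zx}}{e^z+1}+\frac{2e^{z(x+1)}}{e^z+1}=\frac{2e^{zx}(1+e^z)}{e^z+1}=2e^{zx},
\]
and matching coefficients of $z^n/n!$ gives $E_n(x)+E_n(x+1)=2x^n$. So $P_n=E_n$ indeed solves \eqref{eq:Euler}, establishing existence.

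For the converse, the cleanest route is a uniqueness argument. Suppose $P_n$ satisfies \eqref{eq:Euler}; since we have just shown $E_n$ also satisfies it, the difference $D_n(x):=P_n(x)-E_n(x)$ is a polynomial with
\[
D_n(x)+D_n(x+1)=0,
\]
so $D_n(x+1)=-D_n(x)$ and hence $D_n(x+2)=D_n(x)$. A polynomial of period $2$ must be constant, and a constant $c$ satisfying $c=-c$ must be zero. Therefore $P_n=E_n$.

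The main (and essentially only) obstacle is convincing oneself that the functional equation really determines the solution uniquely; once one observes the periodicity trick above, the argument is immediate and avoids having to extract coefficients one by one as in the authors' leading-coefficient induction. This uniqueness step also replaces the more elaborate argument the authors give using the integral representation \eqref{eq:Euler integral} — that approach amounts to applying the linear functional $f\mapsto\int f(\imath u-\tfrac12+x)\,\mathrm{sech}(\pi u)\,du$, which sends $x^k$ to $E_k(x)$, to both sides of \eqref{eq:Euler}; I would only bring it in as an alternative if a second proof is desired.
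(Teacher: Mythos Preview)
Your proof is correct. The forward direction coincides with the paper's: both compute the generating function $\sum_{n\ge0}(E_n(x)+E_n(x+1))z^n/n!=2e^{zx}$ and read off the identity $E_n(x)+E_n(x+1)=2x^n$.

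The uniqueness direction, however, differs from both arguments the paper gives. The paper's first approach compares leading coefficients of $P_n$ and $E_n$ by letting $x\to+\infty$ and then inducts on the degree; its second approach applies the integral transform $f\mapsto\int f(\imath u-\tfrac12+x)\,\mathrm{sech}(\pi u)\,du$ (which sends $x^k$ to $E_k(x)$) to the functional equation to recover $P_n=E_n$ directly. Your route---form the difference $D_n=P_n-E_n$, observe $D_n(x+1)=-D_n(x)$ hence $D_n(x+2)=D_n(x)$, and conclude $D_n\equiv0$ since a periodic polynomial is constant and $2c=0$---is shorter and entirely elementary, using nothing beyond the fact that a nonconstant polynomial is unbounded. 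What the paper's integral-representation argument buys is a reusable tool (later invoked in the paper to pass from identity (\ref{eq:general-2}) to (\ref{eq:general-1-1})), while your periodicity trick is self-contained and arguably the cleanest way to dispatch this particular lemma.
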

Let us introduce next the symbolic notation, with $a\in\mathbb{R},$
\begin{equation}
\left(\left\{ q,H\right\} +a\right)_{n}=\sum_{k=0}^{n}\binom{n}{k}a^{n-k}\left\{ q,H\right\} _{k}\label{eq:sub_n}
\end{equation}
and the convention $\left\{ q,H\right\} _{0}=q.$

We will need the following identity, the proof of which is elementary.
\begin{lem}
\label{lem:Lemma 2}For $a\in\mathbb{R}$ and $n\in\mathbb{N},$

\[
\left\{ q,H+\frac{a}{2}\right\} _{n}=\left(\left\{ q,H\right\} +a\right)_{n}.
\]
\end{lem}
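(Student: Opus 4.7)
The plan is to use exponential generating functions, which the paper has already introduced and which provides the cleanest argument. Recall the displayed formula from the introduction, which gives (with the formal variable substitutions in the roles of the two arguments)
\[
\sum_{n\ge 0}\{q,H\}_{n}\frac{z^{n}}{n!}=e^{zH}qe^{zH},
\]
the $n=0$ term being $q$, in agreement with the convention $\{q,H\}_{0}=q$.

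First, I would compute the EGF of the left-hand side. Since $a/2$ is a scalar and therefore commutes with $H$, the exponential factors as $e^{z(H+a/2)}=e^{za/2}e^{zH}$, and hence
\[
\sum_{n\ge 0}\left\{q,H+\tfrac{a}{2}\right\}_{n}\frac{z^{n}}{n!}=e^{z(H+a/2)}\,q\,e^{z(H+a/2)}=e^{za}\,e^{zH}qe^{zH}.
\]

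Second, by the Cauchy product rule for EGFs, the binomial convolution defining $(\{q,H\}+a)_{n}$ in (\ref{eq:sub_n}) gives
\[
\sum_{n\ge 0}\left(\{q,H\}+a\right)_{n}\frac{z^{n}}{n!}=e^{az}\sum_{k\ge 0}\{q,H\}_{k}\frac{z^{k}}{k!}=e^{az}\,e^{zH}qe^{zH}.
\]
Matching coefficients of $z^{n}/n!$ yields the identity.

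There is no real obstacle: the only points to verify are that the scalar $a/2$ commutes with $H$ (used to split the exponential) and that the $n=0$ term behaves correctly under the stated convention. If a purely combinatorial proof is preferred, one can instead expand $(H+a/2)^{k}$ and $(H+a/2)^{n-k}$ in each summand of $\{q,H+a/2\}_{n}$ by the binomial theorem, regroup by the total power of $H$ sitting on each side of $q$, and apply Vandermonde's identity together with $\sum_{\ell}\binom{n-i-j}{\ell}=2^{n-i-j}$ to recognize $\binom{n}{m}\{q,H\}_{m}$ with coefficient $a^{n-m}$; this is longer but entirely elementary.
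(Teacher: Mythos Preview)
Your proof is correct and is essentially the same as the paper's: both compute the exponential generating function of each side, obtain $e^{za}e^{zH}qe^{zH}$ in both cases (using that the scalar $a/2$ commutes with $H$ to split the exponential), and conclude by matching coefficients. Your remark about the alternative combinatorial route via binomial expansion is a nice aside but not needed here.
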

\begin{proof}
Take the exponential generating function on both sides. On the left-hand
side
\[
\sum_{n\ge0}\left\{ q,H+\frac{a}{2}\right\} _{n}\frac{z^{n}}{n!}=e^{z\left(H+\frac{a}{2}\right)}qe^{z\left(H+\frac{a}{2}\right)}
\]
while on the right-hand side
\[
\sum_{n\ge0}\left(\left\{ q,H\right\} +a\right)_{n}\frac{z^{n}}{n!}=e^{za}e^{zH}qe^{zH}
\]
so that both generating functions coincide. Since they are analytic
functions, the corresponding sequences coincide.
\end{proof}
Using both previous lemmas, we can now deduce the following equivalence.
\begin{prop}
An equivalent form of identity (\ref{eq:Benrder1}) (or (\ref{eq:equivalent}))
is
\[
\frac{1}{2^{n}}\left\{ q,H-\frac{1}{2}\right\} _{n}+\frac{1}{2^{n}}\left\{ q,H+\frac{1}{2}\right\} _{n}=\left\{ q,H^{n}\right\} .
\]
\end{prop}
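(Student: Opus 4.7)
The plan is to show that (\ref{eq:equivalent}) together with a shifted companion identity implies the Proposition, with Lemma \ref{lem:Lemma 2} supplying the shift and Lemma \ref{lem:lemma 1} supplying the averaging property that collapses the result.

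First I would derive the companion of (\ref{eq:equivalent}) with $H+\tfrac{1}{2}$ in place of $H-\tfrac{1}{2}$. Applying Lemma \ref{lem:Lemma 2} to the pair $(q, H-\tfrac{1}{2})$ with $a=2$ (the lemma's proof rests only on the general EGF formula $\sum_{n}\{p,q\}_{n}z^{n}/n!=e^{zq}pe^{zq}$ and so is valid for any pair of operators) gives
\[
\{q,H+\tfrac{1}{2}\}_{n}=\sum_{k=0}^{n}\binom{n}{k}2^{n-k}\{q,H-\tfrac{1}{2}\}_{k}.
\]
Dividing by $2^{n}$, substituting (\ref{eq:equivalent}) on the right, and using the Euler binomial translation $\sum_{k=0}^{n}\binom{n}{k}E_{k}(x)=E_{n}(x+1)$ (a direct consequence of (\ref{eq:EulerGF})), I obtain
\[
\tfrac{1}{2^{n}}\{q,H+\tfrac{1}{2}\}_{n}=\tfrac{1}{2}\{q,E_{n}(H+1)\}.
\]

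Adding this to (\ref{eq:equivalent}) and invoking Lemma \ref{lem:lemma 1}, $E_{n}(x)+E_{n}(x+1)=2x^{n}$, I arrive at
\[
\tfrac{1}{2^{n}}\{q,H-\tfrac{1}{2}\}_{n}+\tfrac{1}{2^{n}}\{q,H+\tfrac{1}{2}\}_{n}=\tfrac{1}{2}\{q,E_{n}(H)+E_{n}(H+1)\}=\{q,H^{n}\},
\]
which is the Proposition. For the converse, the same chain runs backward by induction on $n$: assuming (\ref{eq:equivalent}) for all $k<n$, the Lemma \ref{lem:Lemma 2}-derived expansion (which is purely algebraic and uses neither lemma as a hypothesis) reduces the Proposition's identity to a single equation for $\tfrac{1}{2^{n}}\{q,H-\tfrac{1}{2}\}_{n}$, and a final application of Lemma \ref{lem:lemma 1} pins down the polynomial inside the anti-commutator as $E_{n}(H)$.

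The step that I expect to require the most care is setting up the bridge between the two lemmas, namely verifying that the scalar shift $-\tfrac{1}{2}\mapsto+\tfrac{1}{2}$ inside the nested anti-commutator matches, through the binomial translation $\sum_{k=0}^{n}\binom{n}{k}E_{k}(x)=E_{n}(x+1)$, the argument shift $H\mapsto H+1$ of the Euler polynomial. Once this correspondence is in hand, both directions of the equivalence reduce to routine manipulations, and it becomes clear why neither lemma alone would suffice.
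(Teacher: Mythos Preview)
Your forward direction is essentially the paper's argument, only made explicit: the paper simply writes $\frac{1}{2^{n}}\{q,H+\tfrac12\}_{n}=\tfrac12\{q,E_{n}(H+1)\}$ alongside (\ref{eq:equivalent}) and averages, tacitly relying on exactly the Lemma~\ref{lem:Lemma 2} shift plus the Appell translation $\sum_{k}\binom{n}{k}E_{k}(x)=E_{n}(x+1)$ that you spell out.

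Your converse differs from the paper's. The paper does a direct computation: it expands $E_{n}(H)=\sum_{k}\binom{n}{k}E_{n-k}(0)H^{k}$, replaces each $\{q,H^{k}\}$ by the Proposition's right-hand side, converts every $\{q,H\pm\tfrac12\}_{k}$ to $(\{q,H\}\pm1)_{k}$ via Lemma~\ref{lem:Lemma 2}, and collapses the double sum with the Appell identity and (\ref{eq:Eulermean}) to reach $\frac{1}{2^{n-1}}\{q,H-\tfrac12\}_{n}$. Your inductive route is shorter: the single expansion $\frac{1}{2^{n}}\{q,H+\tfrac12\}_{n}=\sum_{k\le n}\binom{n}{k}\frac{1}{2^{k}}\{q,H-\tfrac12\}_{k}$ isolates the top term, the induction hypothesis handles $k<n$, and one use of (\ref{eq:Eulermean}) finishes it. Both arguments use the same two ingredients (Lemma~\ref{lem:Lemma 2} and the Euler averaging identity); yours avoids the double-sum bookkeeping at the cost of carrying an induction. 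One small phrasing issue: your sentence ``pins down the polynomial inside the anti-commutator'' is slightly off, since $\{q,H-\tfrac12\}_{n}$ is not of the form $\{q,P(H)\}$ a priori---what actually happens is that the right-hand side simplifies directly to $\{q,E_{n}(H)\}$ via (\ref{eq:Eulermean}), not that a polynomial is identified by the uniqueness part of Lemma~\ref{lem:lemma 1}.
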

\begin{proof}
Assume first that
\[
\frac{1}{2^{n}}\left\{ q,H-\frac{1}{2}\right\} _{n}=\frac{1}{2}\left\{ q,E_{n}\left(H\right)\right\} 
\]
holds. Then
\[
\frac{1}{2^{n}}\left\{ q,H-\frac{1}{2}\right\} _{n}+\frac{1}{2^{n}}\left\{ q,H+\frac{1}{2}\right\} _{n}=\left\{ q,\frac{1}{2}\left(E_{n}\left(H\right)+E_{n}\left(H+1\right)\right)\right\} =\left\{ q,H^{n}\right\} 
\]
where we have used the property (\ref{eq:Eulermean}) of Euler polynomials.

Reciprocally, assume that
\[
\frac{1}{2^{n}}\left\{ q,H-\frac{1}{2}\right\} _{n}+\frac{1}{2^{n}}\left\{ q,H+\frac{1}{2}\right\} _{n}=\left\{ q,H^{n}\right\} 
\]
holds. As can be checked from their generating function (\ref{eq:EulerGF}),
the Euler polynomials are Appell polynomials, i.e. they satisfy
\[
E_{n}\left(x\right)=\sum_{k=0}^{n}\binom{n}{k}E_{n-k}\left(0\right)x^{k};
\]
hence we have
\[
\left\{ q,E_{n}\left(H\right)\right\} =\left\{ q,\sum_{k=0}^{n}\binom{n}{k}E_{n-k}\left(0\right)H^{k}\right\} 
\]
so that we need to compute
\[
\sum_{k=0}^{n}\binom{n}{k}E_{n-k}\left(0\right)\frac{1}{2^{k}}\left\{ q,H-\frac{1}{2}\right\} _{k}\thinspace\thinspace\text{and}\thinspace\thinspace\sum_{k=0}^{n}\binom{n}{k}E_{n-k}\left(0\right)\frac{1}{2^{k}}\left\{ q,H+\frac{1}{2}\right\} _{k}
\]
and to show that their sum is equal to 
\[
\frac{1}{2^{n-1}}\left\{ q,H-\frac{1}{2}\right\} _{n}.
\]
By Lemma (\ref{lem:Lemma 2}), the first sum is
\[
\sum_{k=0}^{n}\binom{n}{k}E_{n-k}\left(0\right)\frac{1}{2^{k}}\left\{ q,H-\frac{1}{2}\right\} _{k}=\sum_{k=0}^{n}\binom{n}{k}E_{n-k}\left(0\right)\frac{1}{2^{k}}\left(\left\{ q,H\right\} -1\right)_{k}.
\]
Applying (\ref{eq:sub_n}), this is
\begin{eqnarray*}
\sum_{k=0}^{n}\binom{n}{k}E_{n-k}\left(0\right)\frac{1}{2^{k}}\sum_{l=0}^{k}\binom{k}{l}\left(-1\right)^{k-l}\left\{ q,H\right\} _{l} & = & \sum_{l=0}^{n}\binom{n}{l}\left\{ q,H\right\} _{l}\sum_{p=0}^{n-l}\binom{n-l}{p}E_{n-l-p}\left(0\right)\frac{\left(-1\right)^{p}}{2^{p+l}}\\
 & = & \sum_{l=0}^{n}\binom{n}{l}\frac{\left\{ q,H\right\} _{l}}{2^{l}}E_{n-k}\left(-\frac{1}{2}\right).
\end{eqnarray*}
Thus
\[
\sum_{k=0}^{n}\binom{n}{k}E_{n-k}\left(0\right)\frac{1}{2^{k}}\left\{ q,H-\frac{1}{2}\right\} _{k}=\sum_{l=0}^{n}\binom{n}{l}\frac{\left\{ q,H\right\} _{l}}{2^{l}}E_{n-l}\left(-\frac{1}{2}\right).
\]
and similarly
\[
\sum_{k=0}^{n}\binom{n}{k}E_{n-k}\left(0\right)\frac{1}{2^{k}}\left\{ q,H+\frac{1}{2}\right\} _{k}=\sum_{l=0}^{n}\binom{n}{l}\frac{\left\{ q,H\right\} _{l}}{2^{l}}E_{n-l}\left(\frac{1}{2}\right)
\]
so that the sum equals
\begin{eqnarray*}
\frac{1}{2^{n}}\sum_{l=0}^{n}\binom{n}{l}\frac{\left\{ q,H\right\} _{l}}{2^{l}}\left(E_{n-l}\left(-\frac{1}{2}\right)+E_{n-l}\left(\frac{1}{2}\right)\right) & = & \frac{1}{2^{n-1}}\sum_{l=0}^{n}\binom{n}{l}\frac{\left\{ q,H\right\} _{l}}{2^{l}}\left(-\frac{1}{2}\right)^{n-l}
\end{eqnarray*}
where we have used (\ref{eq:Eulermean}) again.

The last sum is now identified by Lemma (\ref{lem:Lemma 2}) as 
\[
\frac{2}{2^{n}}\left(\left\{ q,H\right\} -1\right)_{n}=\frac{1}{2^{n-1}}\left\{ q,H-\frac{1}{2}\right\} _{n},
\]
and this is the desired result.
\end{proof}
We conclude that proving (\ref{eq:Benrder1}) is equivalent to be
proving the following result.
\begin{thm}
\label{thm:Thm4}If $p$ and $q$ are two operators that satisfy (\ref{eq:noncom})
and $H$ is the associated Hamiltonian defined as in (\ref{eq:Hamiltonian}),
then for any integer $n\ge0,$
\begin{equation}
\frac{1}{2^{n}}\left(\left\{ q,H\right\} -1\right)_{n}+\frac{1}{2^{n}}\left(\left\{ q,H\right\} +1\right)_{n}=\left\{ q,H^{n}\right\} \label{eq:main identity}
\end{equation}
with the symbolic notation $\left(\left\{ q,H\right\} +a\right)_{n}$
as defined in (\ref{eq:sub_n}).\end{thm}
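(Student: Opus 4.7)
The plan is to reduce (\ref{eq:main identity}) to a single identity in a formal variable $z$, by taking exponential generating functions on both sides. Combining the formula $\sum_{n\ge0}\{q,H\}_{n}z^{n}/n!=e^{zH}qe^{zH}$ recalled in the introduction with the definition (\ref{eq:sub_n}) and the Cauchy product yields
\[
\sum_{n\ge 0}(\{q,H\}+a)_n\frac{z^n}{n!}=e^{az}\,e^{zH}qe^{zH},
\]
and the factor $1/2^n$ simply replaces $z$ by $z/2$. Summing the instances $a=-1$ and $a=+1$, the EGF of the left-hand side of (\ref{eq:main identity}) is $2\cosh(z/2)\,e^{zH/2}qe^{zH/2}$, while that of the right-hand side $\{q,H^n\}$ is $qe^{zH}+e^{zH}q$. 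The theorem thus reduces to the operator identity
\[
2\cosh(z/2)\,e^{zH/2}qe^{zH/2}=qe^{zH}+e^{zH}q.
\]

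To establish this reduced identity, I would first evaluate the conjugation $f(z):=e^{zH}qe^{-zH}$. From (\ref{eq:noncom}) and (\ref{eq:Hamiltonian}) a direct computation gives the commutation relations $[H,q]=-\imath p$ and $[H,p]=\imath q$. Hence $f$ and $g(z):=e^{zH}pe^{-zH}$ satisfy the linear system $f'=-\imath g$, $g'=\imath f$ with $f(0)=q$, $g(0)=p$, which integrates to
\[
e^{zH}qe^{-zH}=q\cosh z-\imath p\sinh z.
\]
Replacing $z$ by $-z/2$ and multiplying on the right by $e^{zH/2}$ gives
\[
e^{zH/2}qe^{zH/2}=e^{zH}\bigl[q\cosh(z/2)+\imath p\sinh(z/2)\bigr].
\]

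It then remains to assemble the pieces. Multiplying the last display by $2\cosh(z/2)$ and using $2\cosh^{2}(z/2)=1+\cosh z$ together with $2\cosh(z/2)\sinh(z/2)=\sinh z$ produces $e^{zH}\bigl[q(1+\cosh z)+\imath p\sinh z\bigr]$. On the other hand, the conjugation formula with $z$ replaced by $-z$ gives $qe^{zH}=e^{zH}[q\cosh z+\imath p\sinh z]$, so that $qe^{zH}+e^{zH}q$ equals the same expression, and matching the coefficients of $z^n$ on both sides yields (\ref{eq:main identity}). The one real obstacle is the derivation of $e^{zH}qe^{-zH}=q\cosh z-\imath p\sinh z$: this is the only place where the canonical commutation relation (\ref{eq:noncom}) and the particular shape of $H$ actually enter, essentially as a baby Baker--Campbell--Hausdorff calculation; after that, the proof is pure algebra of exponential generating functions together with standard hyperbolic identities.
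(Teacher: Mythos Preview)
Your argument is correct and is genuinely different from any of the three proofs the paper gives. The paper's ``algebraic'' proof expands everything into triple sums of $H^{l+r}qH^{2n-r-l}$ and reduces to a binomial identity; the ``operator-based'' proof introduces the commuting maps $A_H=[\cdot,H]$ and $B_H=\{\cdot,H\}$ and computes $\sum\binom{2n}{2k}B_H^{2k}A_H^{2n-2k}q$ via $(A_H\pm B_H)^{2n}$; and the ``analytic'' proof checks the identity on Hermite functions. You instead pass at once to exponential generating functions, reducing the whole family of identities to the single operator equation $2\cosh(z/2)\,e^{zH/2}qe^{zH/2}=qe^{zH}+e^{zH}q$, and then verify this via the closed formula $e^{zH}qe^{-zH}=q\cosh z-\imath p\sinh z$. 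This is cleaner than the paper's combinatorial proof and more self-contained than the analytic one; conceptually it is close to the operator proof (your conjugation formula encodes the same information as $A_H^{2}q=q$ and the commutativity of $A_H,B_H$), but packaging it as an EGF identity avoids the case split on parity and the explicit binomial bookkeeping. One small slip: after substituting $z\mapsto -z/2$ you obtain $e^{-zH/2}qe^{zH/2}=q\cosh(z/2)+\imath p\sinh(z/2)$, and to get your displayed formula you multiply on the \emph{left} by $e^{zH}$, not on the right; the resulting identity and the rest of the computation are correct as written.
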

\begin{example}
The case $n=1$ reads
\[
\frac{1}{2}\left(\left\{ q,H\right\} -1\right)_{1}+\frac{1}{2}\left(\left\{ q,H\right\} +1\right)_{1}=\left\{ q,H\right\} .
\]
This is easily checked by expanding the left-hand side as
\[
\frac{1}{2}\left(\left\{ q,H\right\} -1\right)_{1}+\frac{1}{2}\left(\left\{ q,H\right\} +1\right)_{1}=\frac{1}{2}\left(\left\{ q,H\right\} -1\right)+\frac{1}{2}\left(\left\{ q,H\right\} +1\right)=\left\{ q,H\right\} .
\]

\begin{example}
The case $n=2$ reads
\[
\frac{1}{4}\left(\left\{ q,H\right\} -1\right)_{2}+\frac{1}{4}\left(\left\{ q,H\right\} +1\right)_{2}=\left\{ q,H^{2}\right\} .
\]
The left-hand side is now
\[
\frac{1}{4}\left(\left\{ q,H\right\} -1\right)_{2}+\frac{1}{4}\left(\left\{ q,H\right\} +1\right)_{2}=\frac{1}{2}\left\{ q,H\right\} _{2}+\frac{1}{2}\left\{ q,H\right\} _{0}=\frac{1}{2}\left(\left\{ q,H\right\} _{2}+q\right).
\]
But since $\left[q,H\right]_{2}=\left[\left[q,H\right],H\right]$
and $\left[q,H\right]=\imath p,$ we deduce $\left[q,H\right]_{2}=\left[\imath p,H\right]=\imath\left[p,H\right].$
It then follows from $\left[p,H\right]=-\imath q$ that $\left[q,H\right]_{2}=q.$

We deduce
\begin{eqnarray*}
\frac{1}{2}\left(\left\{ q,H\right\} _{2}+q\right) & = & \frac{1}{2}\left(\left\{ q,H\right\} _{2}+\left[q,H\right]_{2}\right)\\
 & = & \frac{1}{2}\left(qH^{2}+2HqH+H^{2}q\right)+\frac{1}{2}\left(qH^{2}-2HqH+H^{2}q\right)\\
 & = & qH^{2}+H^{2}q=\left\{ q,H^{2}\right\} 
\end{eqnarray*}
which is the desired result.
\end{example}
\end{example}

\subsection{An algebraic proof }

The first proof of Thm \ref{thm:Thm4} (and thus of Bender's identity
(\ref{eq:Benrder1})) is given in the case $n$ even only, replacing
$n$ by $2n$ in (\ref{eq:main identity}): its extension to the odd
case follows the same lines. It starts with the expansion
\[
\frac{1}{2}\left(\left\{ q,H\right\} -1\right)_{2n}+\frac{1}{2}\left(\left\{ q,H\right\} +1\right)_{2n}=\sum_{k=0}^{n}\binom{2n}{2k}\left\{ q,H\right\} _{2k}
\]
in which $\left\{ q,H\right\} _{2k}$ is replaced by $\left\{ \left[q,H\right]_{2n-2k},H\right\} _{2k}$
since $\left[q,H\right]_{2n-2k}=q.$ We obtain
\begin{eqnarray*}
\sum_{k=0}^{n}\binom{2n}{2k}\left\{ \left[q,H\right]_{2n-2k},H\right\} _{2k}=\sum_{k=0}^{n}\binom{2n}{2k}\sum_{l=0}^{2k}\binom{2k}{l}H^{l}\left[q,H\right]_{2n-2k}H^{2k-l}\\
=\sum_{k=0}^{n}\binom{2n}{2k}\sum_{l=0}^{2k}\binom{2k}{l}H^{l}\left(\sum_{r=0}^{2n-2k}\binom{2n-2k}{r}\left(-1\right)^{r}H^{r}qH^{2n-2k-r}\right)H^{2k-l}\\
=\sum_{k=0}^{n}\sum_{l=0}^{2k}\sum_{r=0}^{2n-2k}\binom{2n}{2k}\binom{2k}{l}\binom{2n-2k}{r}\left(-1\right)^{r}H^{l+r}qH^{2n-r-l}\\
=\sum_{k=0}^{n}\sum_{l=0}^{2k}\sum_{s=r+l=0}^{2n}\binom{2n}{2k}\binom{2k}{l}\binom{2n-2k}{s-l}\left(-1\right)^{s-l}H^{s}qH^{2n-s}.
\end{eqnarray*}
We thus need to show that
\[
b_{s}=\sum_{k=0}^{n}\sum_{l=0}^{2k}\binom{2n}{2k}\binom{2k}{l}\binom{2n-2k}{s-l}\left(-1\right)^{s-l}=\begin{cases}
0 & 1\le s\le2n-1\\
2^{2n-1} & s=0,\thinspace\thinspace s=2n
\end{cases}
\]
Differentiating the identity
\[
\sum_{k=0}^{n}\binom{2n}{2k}x^{2k}y^{2n-2k}=\frac{1}{2}\left(x+y\right)^{2n}+\frac{1}{2}\left(x-y\right)^{2n}
\]
$i$ times with respect to $x$ and $j$ times with respect to $y,$
and then evaluating at $x=y=1$ gives
\[
\sum_{k=0}^{n}\binom{2n}{2k}\binom{2k}{i}\binom{2n-2k}{j}=\binom{2n}{i}\binom{2n-i}{j}2^{2n-i-j-1}
\]
if $0\le i\le n,\thinspace\thinspace0\le j\le n$ and $i+j\le2n-1,$
while
\[
\sum_{k=0}^{n}\binom{2n}{2k}\binom{2k}{n}\binom{2n-2k}{n}=\frac{1}{2}\binom{2n}{n}\left(1+\left(-1\right)^{n}\right)
\]
if $i=j=n.$ We deduce, for $1\le s\le2n-1,$
\begin{eqnarray*}
b_{s} & = & \sum_{l=0}^{s}\sum_{k=0}^{n}\binom{2n}{2k}\binom{2k}{l}\binom{2n-2k}{s-l}\left(-1\right)^{s-l}=\sum_{l=0}^{s}\left(-1\right)^{s-l}\binom{2n}{l}\binom{2n-l}{s-l}2^{2n-s-1}\\
 & = & 2^{2n-s-1}\binom{2n}{s}\sum_{l=0}^{s}\left(-1\right)^{s-l}\binom{s}{l}=0.
\end{eqnarray*}
A straightforward computation gives $b_{0}=b_{2n}=2^{2n-1}.$ This
concludes the proof.

\subsection{An operator-based proof}

Another version of the proof of Thm \ref{thm:Thm4} is obtained now by identifying the nested commutators and anti-commutators $\left[q,H\right]_{k}$ and $\left\{q,H\right\}_{k}$ as powers of simple operators.

Define the operators $A_H$ and $B_H$ acting on $q$ as
\[
A_{H}q=\left[q,H\right],\,\,\,\, B_{H}q=\left\{q,H\right\}.
\]
We first verify the following identities.
\begin{lem}
The operators $A_{H}$ and $B_{H}$ satisfy, for $k \in \mathbb{N},$
\begin{equation}
A_{H}^{k}q=\left[q,H\right]_{k},\thinspace\thinspace B_{H}^{k}q=\left\{q,H\right\}_{k}
\label{eq:shift1}
\end{equation}
and
\begin{equation}
\left(A_{H}+B_{H}\right)^{k}q=2^{k}qH^{k},\thinspace\thinspace  \left(A_{H}-B_{H}\right)^{k}q=2^{k}H^{k}q.
\label{eq:shift2}
\end{equation}
Moreover, $A_{H}$ and $B_{H}$ commute.
\end{lem}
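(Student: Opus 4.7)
The plan is to extend $A_{H}$ and $B_{H}$, currently specified only through their action on $q$, to linear superoperators on the whole associative algebra generated by $p$ and $q$, by setting $A_{H}X=[X,H]=XH-HX$ and $B_{H}X=\{X,H\}=XH+HX$ for every operator $X$. With this extension in place one writes
\[
A_{H}=R_{H}-L_{H},\qquad B_{H}=R_{H}+L_{H},
\]
where $L_{H}$ and $R_{H}$ denote left and right multiplication by $H$. Since $L_{H}R_{H}X=HXH=R_{H}L_{H}X$ for every $X$, the superoperators $L_{H}$ and $R_{H}$ commute, and therefore so do $A_{H}$ and $B_{H}$, which are simply polynomials in them.

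To prove (\ref{eq:shift1}) I proceed by induction on $k$. The base case $k=1$ is the very definition of $A_{H}q$ and $B_{H}q$. For the inductive step, the recursive definitions $[q,H]_{k}=[[q,H]_{k-1},H]$ and $\{q,H\}_{k}=\{\{q,H\}_{k-1},H\}$ combine with the extended superoperators to give
\[
A_{H}^{k}q=A_{H}\bigl([q,H]_{k-1}\bigr)=\bigl[[q,H]_{k-1},H\bigr]=[q,H]_{k},
\]
and the anti-commutator case is identical.

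For (\ref{eq:shift2}) I compute directly $A_{H}+B_{H}=2R_{H}$ and $B_{H}-A_{H}=2L_{H}$, so that
\[
(A_{H}+B_{H})^{k}q=2^{k}R_{H}^{k}q=2^{k}qH^{k},\qquad (B_{H}-A_{H})^{k}q=2^{k}L_{H}^{k}q=2^{k}H^{k}q,
\]
which is the content of (\ref{eq:shift2}) (the sign convention in the second identity is to be read accordingly, since $A_{H}-B_{H}=-2L_{H}$). Here one uses that $L_{H}$ alone is a single operator so its powers act as $L_{H}^{k}X=H^{k}X$, and likewise for $R_{H}$.

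I do not foresee any genuine obstacle: the whole argument rests on the single conceptual point that, once $A_{H}$ and $B_{H}$ are recognized as polynomials in the manifestly commuting superoperators $L_{H}$ and $R_{H}$, the binomial expansions of $(A_{H}\pm B_{H})^{k}$ and the inductive identification of nested (anti-)commutators with powers become entirely formal. The only mild subtlety is the enlargement of domain in the first step, which must be made before powers of $A_{H}$ and $B_{H}$ are even well defined.
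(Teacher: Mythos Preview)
Your proof is correct. The paper's own argument is terser: it simply asserts that both (\ref{eq:shift1}) and (\ref{eq:shift2}) follow by induction on $k$, and verifies commutation by the single check $A_{H}B_{H}q=qH^{2}-H^{2}q=B_{H}A_{H}q$. Your route through the left/right multiplication superoperators $L_{H}$ and $R_{H}$ is a genuine refinement: it makes the commutativity of $A_{H}$ and $B_{H}$ structurally evident (as polynomials in manifestly commuting maps) rather than something to be checked on $q$ alone, and it reduces (\ref{eq:shift2}) to the one-line observation $A_{H}+B_{H}=2R_{H}$, $A_{H}-B_{H}=-2L_{H}$ in place of an induction. You are also right to flag the sign: literally $(A_{H}-B_{H})^{k}q=(-2)^{k}H^{k}q$, so the second identity in (\ref{eq:shift2}) holds as stated only for even $k$; since the lemma is invoked in the paper only with exponent $2n$, this causes no harm downstream.
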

\begin{proof}
The identities \eqref{eq:shift1} and \eqref{eq:shift2} are easily proved by induction on $k.$
Moreover, $B_{H}A_{H}q=qH^2-H^2q=A_{H}B_{H}q$ so that $A_{H}$ and $B_{H}$ commute.
\end{proof}
We obtain the final identity using (\ref{eq:shift1}) and (\ref{eq:shift2})
as follows:
\begin{eqnarray*}
\sum_{k=0}^{n}\binom{2n}{2k}\left\{ \left[q,H\right]_{2n-2k},H\right\} _{2k} & = & \sum_{k=0}^{n}\binom{2n}{2k}B_{H}^{2k}A_{H}^{2n-2k}q\\
 & = & \frac{1}{2}\left(\left(A_{H}+B_{H}\right)^{2n}+\left(A_{H}-B_{H}\right)^{2n}\right)q\\
 & = & 2^{2n-1}\left(qH^{2n}+H^{2n}q\right)=2^{2n-1}\left\{ q,H^{2n}\right\} .
\end{eqnarray*}

\subsection{Another Proof: an analytic approach}

The third proof of Thm \ref{thm:Thm4} is purely analytic: we choose
the realization
\[
p=x,\thinspace\thinspace q=\imath\frac{d}{dx}.
\]
This is a faithful realization \cite{Flajolet}: any result proved
with this realization holds for any other operators $p$ and $q$
that satisfy the commutation relation (\ref{eq:noncom}). 

Next we consider the Hermite functions defined by the Rodrigues formula
\[
\psi_{n}\left(x\right)=\left(-1\right)^{n}\frac{1}{\sqrt{2^{n}n!\sqrt{\pi}}}e^{\frac{x^{2}}{2}}\frac{d^{n}}{dx^{n}}e^{-x^{2}},\thinspace\thinspace n\in\mathbb{N}.
\]
These functions satisfy the differential equation
\[
-\psi"_{n}\left(x\right)+x^{2}\psi_{n}\left(x\right)=\left(2n+1\right)\psi_{n}\left(x\right)
\]
or equivalently
\begin{equation}
H\psi_{n}\left(x\right)=\left(n+\frac{1}{2}\right)\psi_{n}\left(x\right),\label{eq:HPsi}
\end{equation}
so that they are eigenfunctions of the Hamiltonian. Since they form
a basis of $L^{2}\left(\mathbb{R}\right),$ it is sufficient to verify
(\ref{eq:main identity}) on this set of functions. But it is easily
verified that
\[
\left\{ q,H\right\} _{n}\psi_{l}\left(x\right)=\sum_{k=0}^{n}\binom{n}{k}H^{k}qH^{n-k}\psi_{l}\left(x\right)
\]
with, from (\ref{eq:HPsi}),
\[
H^{n-k}\psi_{l}\left(x\right)=\left(l+\frac{1}{2}\right)^{n-k}\psi_{l}\left(x\right)
\]
and with
\[
q\psi_{l}\left(x\right)=\imath\left(\sqrt{\frac{l}{2}}\psi_{l-1}\left(x\right)-\sqrt{\frac{l+1}{2}}\psi_{l+1}\left(x\right)\right)
\]
so that
\begin{eqnarray*}
\left\{ q,H\right\} _{n}\psi_{l}\left(x\right) & = & \sum_{k=0}^{n}\binom{n}{k}H^{k}qH^{n-k}\psi_{l}\left(x\right)\\
=\sum_{k=0}^{n}\binom{n}{k}\left(l-\frac{1}{2}\right)^{k}\imath\sqrt{\frac{l}{2}}\left(l+\frac{1}{2}\right)^{n-k}\psi_{l-1}\left(x\right) & - & \sum_{k=0}^{n}\binom{n}{k}\left(l+\frac{3}{2}\right)^{k}\imath\sqrt{\frac{l+1}{2}}\left(l+\frac{1}{2}\right)^{n-k}\psi_{l+1}\left(x\right)\\
=\imath\sqrt{\frac{l}{2}}\left(2l\right)^{n}\psi_{l-1}\left(x\right)-\imath\sqrt{\frac{l+1}{2}}\left(2l+2\right)^{n}\psi_{l+1}\left(x\right) & = & \imath2^{n-\frac{1}{2}}\left(l^{n+\frac{1}{2}}\psi_{l-1}\left(x\right)-\left(l+1\right)^{n+\frac{1}{2}}\psi_{l+1}\left(x\right)\right).
\end{eqnarray*}
We deduce
\begin{eqnarray*}
\left(\left\{ q,H\right\} +1\right)_{n}\psi_{l}\left(x\right) & = & \sum_{k=0}^{n}\binom{n}{k}\left\{ q,H\right\} _{k}\psi_{l}\left(x\right)\\
 & = & \sum_{k=0}^{n}\binom{n}{k}\imath2^{k-\frac{1}{2}}\left(l^{k+\frac{1}{2}}\psi_{l-1}\left(x\right)-\left(l+1\right)^{k+\frac{1}{2}}\psi_{l+1}\left(x\right)\right)\\
 & = & \imath\sqrt{\frac{l}{2}}\left(1+2l\right)^{n}\psi_{l-1}\left(x\right)-\imath\sqrt{\frac{l+1}{2}}\left(2l+3\right)^{n}\psi_{l+1}\left(x\right)
\end{eqnarray*}
and accordingly
\[
\left(\left\{ q,H\right\} -1\right)_{n}\psi_{l}\left(x\right)=\imath\sqrt{\frac{l}{2}}\left(2l-1\right)^{n}\psi_{l-1}\left(x\right)-\imath\sqrt{\frac{l+1}{2}}\left(2l+1\right)^{n}\psi_{l+1}\left(x\right)
\]
so that
\begin{eqnarray*}
\left[\left(\left\{ q,H\right\} +1\right)_{n}+\left(\left\{ q,H\right\} -1\right)_{n}\right]\psi_{l}\left(x\right) & = & \imath\sqrt{\frac{l}{2}}\left[\left(2l+1\right)^{n}+\left(2l-1\right)^{n}\right]\psi_{l-1}\left(x\right)\\
 & - & \imath\sqrt{\frac{l+1}{2}}\left[\left(2l+3\right)^{n}+\left(2l+1\right)^{n}\right]\psi_{l+1}\left(x\right).
\end{eqnarray*}
Now
\begin{eqnarray*}
\left\{ q,H^{n}\right\} \psi_{l}\left(x\right) & = & qH^{n}\psi_{l}\left(x\right)+H^{n}q\psi_{l}\left(x\right)\\
 & = & \left(l+\frac{1}{2}\right)^{n}\imath\left[\sqrt{\frac{l}{2}}\psi_{l-1}\left(x\right)-\sqrt{\frac{l+1}{2}}\psi_{l+1}\left(x\right)\right]\\
 & + & \imath\sqrt{\frac{l}{2}}\left(l-\frac{1}{2}\right)^{n}\psi_{l-1}\left(x\right)-\imath\sqrt{\frac{l+1}{2}}\left(l+\frac{3}{2}\right)^{n}\psi_{l+1}\left(x\right)
\end{eqnarray*}
which proves the result.

\section{identities on commutators and anti-commutators of monomials}

\subsection{Introduction}

In this second part, we change notations compared to the first part:
in order to follow J.-C. Pain's notations, we assume now that
the operators $p$ and $q$ satisfy the more general commutation relation
\[
\left[p,q\right]=pq-qp=c,
\]
with $c\in\mathbb{C}\backslash\left\{ 0\right\} .$ Bender's results
in the former section correspond to $c=-\imath.$ 

In the recent publication \cite{Pain}, J.-C. Pain derived the following
identity between commutators and anti-commutators of monomials of
the operators $p$ and $q.$
\begin{thm}
\cite[eq. (43)]{Pain}If $p$ and $q$ are two operators such that
\[
\left[p,q\right]=pq-qp=c,
\]
then the commutators $\left[p^{n},q^{m}\right]$ and anti-commutators
$\left\{ p^{n},q^{m}\right\} $ of monomials are related by the convolution
identity 
\begin{equation}
\left[\frac{p^{n}}{n!},\frac{q^{m}}{m!}\right]=-\sum_{k=1}^{\min\left(m,n\right)}c^{k}\frac{E_{k}\left(0\right)}{k!}\left\{ \frac{p^{n-k}}{\left(n-k\right)!},\frac{q^{m-k}}{\left(m-k\right)!}\right\} \label{eq:pain}
\end{equation}
where $E_{k}\left(0\right)$ is the Euler polynomial \footnote{Note that $E_{k}\left(0\right)$ should not be confused with the $k-$th
Euler number $E_{k}$ defined as $E_{k}=2^{k}E_{k}\left(\frac{1}{2}\right)$} of degree $k$ evaluated at $0,$ as defined by (\ref{eq:EulerGF}).
\end{thm}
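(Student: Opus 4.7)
The plan is to package both sides into two--variable exponential generating functions in $s,t$ and reduce the claim to a single elementary analytic identity, from which the stated equality follows by coefficient extraction. Set
\[
A(s,t)=e^{sp}e^{tq}-e^{tq}e^{sp},\qquad B(s,t)=e^{sp}e^{tq}+e^{tq}e^{sp};
\]
expanding the exponentials shows that $A$ is the EGF of $[p^{n},q^{m}]$ and $B$ that of $\{p^{n},q^{m}\}$ in the variables $s,t$.

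The key preliminary step, and the only place where the noncommutativity of $p$ and $q$ enters, is the Heisenberg--type exchange relation
\[
e^{sp}e^{tq}=e^{tq}e^{sp}\,e^{cst}.
\]
I would derive this by first showing $e^{sp}qe^{-sp}=q+cs$, which follows because $[p,q]=c$ is central, so the nested commutator expansion truncates after one step; then $e^{sp}e^{tq}e^{-sp}=e^{t(q+cs)}=e^{tq}e^{cst}$ since the scalar $cs$ commutes with $tq$. Substituting this into the definitions of $A$ and $B$ gives $A=e^{tq}e^{sp}(e^{cst}-1)$ and $B=e^{tq}e^{sp}(e^{cst}+1)$, hence
\[
A(s,t)=\frac{e^{cst}-1}{e^{cst}+1}\,B(s,t).
\]

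To conclude, I would invoke the Euler generating function (\ref{eq:EulerGF}) at $x=0$, giving $\sum_{k\ge 0}\frac{E_{k}(0)}{k!}u^{k}=\frac{2}{e^{u}+1}$, so
\[
\frac{e^{u}-1}{e^{u}+1}=1-\frac{2}{e^{u}+1}=-\sum_{k\ge 1}\frac{E_{k}(0)}{k!}u^{k}.
\]
Substituting $u=cst$ and multiplying by $B(s,t)$, the identity becomes $A(s,t)=-\sum_{k\ge 1}\frac{E_{k}(0)c^{k}}{k!}\,s^{k}t^{k}\,B(s,t)$. Expanding $B$ as an EGF and performing the index shift $(n,m)\mapsto(n+k,m+k)$ reproduces precisely the right--hand side of (\ref{eq:pain}); the truncation $k\le\min(m,n)$ is forced by $n-k\ge 0$ and $m-k\ge 0$, and extracting the coefficient of $s^{n}t^{m}/(n!\,m!)$ completes the proof. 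The main obstacle is nothing more than the exchange relation, which is entirely routine thanks to the centrality of $c$; everything else is formal power series arithmetic.
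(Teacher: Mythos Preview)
Your proposal is correct and follows essentially the same generating-function route as the paper: both arguments package the two sides into bivariate exponential generating functions, establish the relation $[e^{sp},e^{tq}]=\dfrac{e^{cst}-1}{e^{cst}+1}\{e^{sp},e^{tq}\}$, and then read off the coefficients via the Euler generating function~(\ref{eq:EulerGF}). The only cosmetic difference is that the paper obtains the exchange relation from McCoy's identity~(\ref{eq:McCoy}) (equivalently Baker--Campbell--Hausdorff), whereas you derive it directly from $e^{sp}qe^{-sp}=q+cs$; the two are of course equivalent given that $c$ is central.
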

The proof given in \cite{Pain} relies on the fact that if the identity
\begin{equation}
\left[\frac{p^{n}}{n!},\frac{q^{m}}{m!}\right]=\sum_{k=1}^{\min\left(m,n\right)}c^{k}\frac{v_{k}}{k!}\left\{ \frac{p^{n-k}}{\left(n-k\right)!},\frac{q^{m-k}}{\left(m-k\right)!}\right\} \label{eq:pain2}
\end{equation}
holds for some sequence $\left\{ v_{k}\right\} ,$ then these numbers
$v_{k}$ should satisfy (see \cite{Pain})
\[
v_{k}+\sum_{l=1}^{k}\binom{k}{l}v_{l}=1,
\]
which, with $v_{0}=1,$ characterizes exactly these numbers as $v_{k}=-E_{k}\left(0\right),\thinspace\thinspace k\ge1.$

\subsection{A proof using generating functions}

We give here another proof based on generating functions: consider
the bivariate exponential generating function of the left-hand side
of (\ref{eq:pain2})
\[
\sum_{m,n\ge0}\left[p^{n},q^{m}\right]\frac{u^{n}}{n!}\frac{v^{m}}{m!}=\left[e^{up},e^{vq}\right].
\]
For the right-hand side, we have the bivariate generating function
\begin{eqnarray*}
\sum_{m,n\ge0}\sum_{k=1}^{\min\left(m,n\right)}c^{k}k!v_{k}\binom{n}{k}\binom{m}{k}\left\{ p^{n-k},q^{m-k}\right\} \frac{u^{n}}{n!}\frac{v^{m}}{m!} & = & \sum_{m,n\ge0}\sum_{k=1}^{\min\left(m,n\right)}\frac{\left(cuv\right)^{k}}{k!}v_{k}\left\{ \frac{\left(up\right)^{n-k}}{\left(n-k\right)!},\frac{\left(vq\right)^{m-k}}{\left(m-k\right)!}\right\} \\
=\left(\sum_{k=1}^{+\infty}\frac{\left(cuv\right)^{k}}{k!}v_{k}\right)\sum_{m,n\ge0}\left\{ \frac{\left(up\right)^{n}}{n!},\frac{\left(vq\right)^{m}}{m!}\right\}  & = & \left(\sum_{k=1}^{+\infty}\frac{\left(cuv\right)^{k}}{k!}v_{k}\right)\left\{ e^{up},e^{vq}\right\} 
\end{eqnarray*}
and we deduce, denoting $z=cuv,$
\begin{equation}
\left[e^{up},e^{vq}\right]=\left(\sum_{k=1}^{+\infty}\frac{z^{k}}{k!}v_{k}\right)\left\{ e^{up},e^{vq}\right\} .\label{eq:generating}
\end{equation}
But since $\left[p,q\right]=c,$ using McCoy's identity \cite[Formula (14)]{McCoy}
\begin{equation}
\left[f\left(p\right),g\left(q\right)\right]=-\sum_{k\ge1}\frac{\left(-c\right)^{k}}{k!}f^{\left(k\right)}\left(p\right)g^{\left(k\right)}\left(q\right)\label{eq:McCoy}
\end{equation}
with $f\left(p\right)=\exp\left(up\right)$ and $f\left(q\right)=\exp\left(vq\right)$,
we deduce
\begin{equation}
\left[e^{up},e^{vq}\right]=\left(1-e^{-z}\right)e^{up}e^{vq}\label{eq:eupevqcommutator}
\end{equation}
and, since $-\left\{ e^{up},e^{vq}\right\} =\left[e^{up},e^{vq}\right]-2e^{up}e^{vq},$
we deduce
\begin{equation}
\left\{ e^{up},e^{vq}\right\} =\left(1+e^{-z}\right)e^{up}e^{vq}.\label{eq:eupevqanticommutator}
\end{equation}

In fact, these two identities can be obtained without using McCoy's
identity (\ref{eq:McCoy}) as a consequence of the celebrated Baker-Campbell-Hausdorff
formula that tells us a more precise result, namely that
\[
e^{p}e^{q}=e^{p+q+\frac{1}{2}\left[p,q\right]}=e^{p+q+\frac{c}{2}}
\]
and
\[
e^{q}e^{p}=e^{q+p-\frac{c}{2}};
\]
from these two formulas, we deduce
\[
e^{p+q}=e^{-\frac{c}{2}}e^{p}e^{q}=e^{\frac{c}{2}}e^{q}e^{p}
\]
and then (\ref{eq:eupevqcommutator}) and (\ref{eq:eupevqanticommutator}).

Hence the commutators and anti-commutators are related as
\begin{eqnarray}
\left[e^{up},e^{vq}\right] & = & \frac{1-e^{-z}}{1+e^{-z}}\left\{ e^{up},e^{vq}\right\} =\frac{e^{z}-1}{e^{z}+1}\left\{ e^{up},e^{vq}\right\} .\label{eq:comm anticomm}
\end{eqnarray}
Identifying with (\ref{eq:generating}) gives
\[
\sum_{k=1}^{+\infty}\frac{z^{k}}{k!}v_{k}=\frac{e^{z}-1}{e^{z}+1}
\]
or
\[
\sum_{k=0}^{+\infty}\frac{z^{k}}{k!}v_{k}=1+\frac{e^{z}-1}{e^{z}+1}=\frac{2e^{z}}{1+e^{z}}=\sum_{k=0}^{+\infty}\frac{E_{k}\left(1\right)}{k!}z^{k}
\]
so that $v_{k}=E_{k}\left(1\right).$ Using (\ref{eq:Eulermean}),
we deduce, for all $k\ge1,$
\[
v_{k}=E_{k}\left(1\right)=-E_{k}\left(0\right).
\]

\subsection{The reciprocal identity}

One of the interesting features of this generating function method
is that it allows us to invert the identity (\ref{eq:pain}), giving
an expression of the anti-commutators in terms of the commutators.
First recall that the Bernoulli numbers are defined by the exponential
generating function
\[
\sum_{n\ge0}\frac{B_{n}}{n!}z^{n}=\frac{z}{e^{z}-1},\thinspace\thinspace\vert z\vert<2\pi.
\]
We can now prove the following result.
\begin{thm}
If $p$ and $q$ satisfy $\left[p,q\right]=c$ then for $m,n\ge0,$
\begin{equation}
\left\{ \frac{p^{n}}{n!},\frac{q^{m}}{m!}\right\} =\frac{2}{c}\left[\frac{p^{n+1}}{\left(n+1\right)!},\frac{q^{m+1}}{\left(m+1\right)!}\right]+2\sum_{k\ge1}^{\min\left(m,n\right)}\frac{c^{k}}{k!}\frac{B_{k+1}}{k+1}\left[\frac{p^{n-k}}{\left(n-k\right)!},\frac{q^{m-k}}{\left(m-k\right)!}\right]\label{eq:pain3}
\end{equation}
where $B_{n}$ is the $n-$th Bernoulli number.\end{thm}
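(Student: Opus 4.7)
The plan is to parallel the generating-function proof of the preceding theorem, but to run it in reverse: I would solve the relation (\ref{eq:comm anticomm}) for the anticommutator rather than for the commutator. The output will naturally pull in Bernoulli numbers where Euler polynomial values appeared before, because the relevant kernel changes from $\frac{e^z-1}{e^z+1}$ to its reciprocal, whose expansion in $z$ is the classical generating function of the $B_k$.

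First, I would form the bivariate exponential generating function of the left-hand side of (\ref{eq:pain3}), which by the same coefficient bookkeeping as in the previous subsection is simply $\left\{e^{up}, e^{vq}\right\}$, and rewrite the relation (\ref{eq:comm anticomm}) as
\[
\left\{e^{up}, e^{vq}\right\} = \frac{e^{z}+1}{e^{z}-1}\left[e^{up}, e^{vq}\right], \qquad z = cuv.
\]
Next, I would expand the kernel as a Laurent series in $z$:
\[
\frac{e^{z}+1}{e^{z}-1} = 1 + \frac{2}{e^{z}-1} = 1 + \frac{2}{z}\sum_{k\ge 0}\frac{B_{k}}{k!}z^{k} = \frac{2}{z} + 2\sum_{k\ge 1}\frac{B_{k+1}}{(k+1)!}z^{k},
\]
where $B_{0}=1$ produces the $\frac{2}{z}$ singular term and $B_{1}=-\frac{1}{2}$ cancels the remaining constant. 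Finally, I would multiply by $\left[e^{up}, e^{vq}\right]$ and identify coefficients of $u^{n}v^{m}$: the singular term $\frac{2}{z} = \frac{2}{cuv}$ shifts indices by one and yields $\frac{2}{c}\left[\frac{p^{n+1}}{(n+1)!}, \frac{q^{m+1}}{(m+1)!}\right]$, while the power-series remainder in $z=cuv$ contributes $2\sum_{k\ge 1}\frac{c^{k}B_{k+1}}{(k+1)!}\left[\frac{p^{n-k}}{(n-k)!}, \frac{q^{m-k}}{(m-k)!}\right]$, with the upper bound $\min(m,n)$ being automatic since commutators in which either factor is a scalar vanish.

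The one delicate point is the Laurent-to-power-series transition: dividing by $z$ is legitimate only because $\left[e^{up}, e^{vq}\right]$ is divisible by $uv$ as a formal series in $u,v$. This follows directly from the identity $\left[e^{up}, e^{vq}\right] = (1-e^{-z})e^{up}e^{vq}$ established in (\ref{eq:eupevqcommutator}), since the prefactor $1-e^{-z}$ is visibly divisible by $z=cuv$; equivalently, $[p^{n},1]=[1,q^{m}]=0$ kills all pure-$u$ or pure-$v$ monomials of the commutator series. Once this divisibility is recorded, the matching of coefficients is routine and runs in exact analogy with the proof of (\ref{eq:pain}).
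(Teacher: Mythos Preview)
Your proposal is correct and follows essentially the same route as the paper: invert (\ref{eq:comm anticomm}) to express $\{e^{up},e^{vq}\}$ in terms of $[e^{up},e^{vq}]$, expand the kernel $\frac{e^{z}+1}{e^{z}-1}=1+\frac{2}{z}\sum_{k\ge0}\frac{B_k}{k!}z^k$ via the Bernoulli generating function, simplify using $B_0=1,\,B_1=-\tfrac12$, and match coefficients of $u^n v^m$. Your explicit remark that the $\tfrac{2}{z}$ term is harmless because $[e^{up},e^{vq}]=(1-e^{-z})e^{up}e^{vq}$ is divisible by $z=cuv$ is a welcome point of rigor that the paper leaves implicit.
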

\begin{proof}
First transform the identity
\[
\left[e^{up},e^{vq}\right]=\frac{e^{z}-1}{e^{z}+1}\left\{ e^{up},e^{vq}\right\} 
\]
into the equivalent
\begin{eqnarray*}
\left\{ e^{up},e^{vq}\right\}  & = & \frac{e^{cuv}+1}{e^{cuv}-1}\left[e^{up},e^{vq}\right]=\left(1+\frac{2}{e^{z}-1}\right)\left[e^{up},e^{vq}\right]\\
 & = & \left(1+\frac{2}{z}\frac{z}{e^{z}-1}\right)\left[e^{up},e^{vq}\right]=\left(1+\frac{2}{z}\sum_{k\ge0}\frac{z^{k}}{k!}B_{k}\right)\left[e^{up},e^{vq}\right].
\end{eqnarray*}
Since $B_{0}=1$ and $B_{1}=-\frac{1}{2},$ this gives
\[
\left\{ e^{up},e^{vq}\right\} =\left(1+\frac{2}{z}\left(1-\frac{1}{2}z+\sum_{k\ge2}\frac{z^{k}}{k!}B_{k}\right)\right)\left[e^{up},e^{vq}\right]
\]
which simplifies to
\[
\left\{ e^{up},e^{vq}\right\} =2\left(\frac{1}{z}+\sum_{k\ge1}\frac{z^{k}}{k!}\frac{B_{k+1}}{k+1}\right)\left[e^{up},e^{vq}\right].
\]
Expanding the exponentials in the commutator and anti-commutator and
identifying the corresponding powers of $u$ and $v$ with $z=cuv$
gives the desired result.
\end{proof}
As remarked by J.-C. Pain in \cite{Pain}, using the fact that 
\[
E_{k}\left(0\right)=-2\left(2^{k+1}-1\right)\frac{B_{k+1}}{k+1},
\]
the original identity (\ref{eq:pain}) can be rewritten as
\[
\left[\frac{p^{n}}{n!},\frac{q^{m}}{m!}\right]=2\sum_{k=1}^{\min\left(m,n\right)}\frac{c^{k}}{k!}\left(2^{k+1}-1\right)\frac{B_{k+1}}{k+1}\left\{ \frac{p^{n-k}}{\left(n-k\right)!},\frac{q^{m-k}}{\left(m-k\right)!}\right\} 
\]
which now shows more resemblance with its reciprocal version (\ref{eq:pain3}).

\subsection{An analytic approach}

We look now for another proof of the identity (\ref{eq:pain}) using
an analytic approach: the operators $p$ and $q$ are realized as
\[
p=cf'\left(x\right),\thinspace\thinspace qf\left(x\right)=xf\left(x\right)
\]
so that they satisfy the condition
\[
\left[p,q\right]=c,
\]
and we look at the action of the commutators $\left[p^{n},q^{m}\right]$
and anti-commutators $\left\{ p^{n},q^{m}\right\} $ on the monomials
$f\left(x\right)=x^{l}$ where we assume that $l\ge n.$ An easy computation
gives \footnote{\,the usual convention for binomial coefficients is used here: $\binom{n}{k}=0$
if $k>n$.} 
\[
\left[\frac{p^{n}}{n!},\frac{q^{m}}{m!}\right]x^{l}=\frac{c^{n}}{m!}\left(\binom{m+l}{n}-\binom{l}{n}\right)x^{l-n+m}
\]
and
\[
\left\{ \frac{p^{n-k}}{\left(n-k\right)!},\frac{q^{m-k}}{\left(m-k\right)!}\right\} x^{l}=\frac{c^{n-k}}{\left(m-k\right)!}\left(\binom{m-k+l}{n-k}+\binom{l}{n-k}\right)x^{l-n+m}.
\]
Hence we need to prove that
\[
\binom{m+l}{n}-\binom{l}{n}=-\sum_{k=1}^{\min\left(m,n\right)}E_{k}\left(0\right)\binom{m}{k}\left(\binom{m-k+l}{n-k}+\binom{l}{n-k}\right)
\]
or equivalently, since the Euler polynomials satisfy $E_{k}\left(1\right)=-E_{k}\left(0\right)$
for $k\ge1,$ 
\begin{equation}
\sum_{k=0}^{\min\left(m,n\right)}E_{k}\left(0\right)\binom{m}{k}\binom{m-k+l}{n-k}=\sum_{k=0}^{\min\left(m,n\right)}E_{k}\left(1\right)\binom{m}{k}\binom{l}{n-k}.\label{eq:pain analytic}
\end{equation}
We will prove the following more general result.
\begin{thm}
The following identity holds
\begin{equation}
\sum_{k=0}^{\min\left(m,n\right)}z^{k}\binom{m}{k}\binom{m-k+l}{n-k}=\sum_{k=0}^{\min\left(m,n\right)}\left(z+1\right)^{k}\binom{m}{k}\binom{l}{n-k}\label{eq:general-2}
\end{equation}
As a consequence, the Euler polynomials satisfy the identity
\begin{equation}
\sum_{k=0}^{\min\left(m,n\right)}E_{k}\left(z\right)\binom{m}{k}\binom{m-k+l}{n-k}=\sum_{k=0}^{\min\left(m,n\right)}E_{k}\left(z+1\right)\binom{m}{k}\binom{l}{n-k}.\label{eq:general-1-1}
\end{equation}
The particular case $z=0$ gives the desired identity (\ref{eq:pain analytic}).\end{thm}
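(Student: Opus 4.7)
The plan is to prove the polynomial identity (\ref{eq:general-2}) by coefficient extraction in a single two-variable generating function, and then lift it to the Euler polynomial identity (\ref{eq:general-1-1}) via the integral representation (\ref{eq:Euler integral}).

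First I would rewrite each binomial coefficient as a coefficient of $x^n$ in a suitable polynomial: namely $\binom{m-k+l}{n-k} = [x^n]\,x^k(1+x)^{m-k+l}$ and $\binom{l}{n-k} = [x^n]\,x^k(1+x)^l$. Substituting into the left-hand side of (\ref{eq:general-2}) and pulling the factor $(1+x)^l$ out of the sum gives, by the binomial theorem,
\[
[x^n]\,(1+x)^l\sum_{k\geq 0}\binom{m}{k}(xz)^k(1+x)^{m-k} \;=\; [x^n]\,(1+x)^l\bigl(1+x(z+1)\bigr)^m.
\]
The same manipulation on the right-hand side yields the same expression $[x^n]\,(1+x)^l\bigl(1+x(z+1)\bigr)^m$, so the two sides coincide. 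The upper limits $\min(m,n)$ take care of themselves, since $\binom{m}{k}$ vanishes for $k>m$ and the remaining binomial factor vanishes for $k>n$ under the standard convention.

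Next, to pass from (\ref{eq:general-2}) to (\ref{eq:general-1-1}), I would imitate the second proof of Lemma \ref{lem:lemma 1}: substitute $z\mapsto \imath u-\tfrac{1}{2}+z$ in (\ref{eq:general-2}) and integrate both sides termwise against $\mathrm{sech}(\pi u)$ over $\mathbb{R}$. By (\ref{eq:Euler integral}), this transform sends $z^k$ to $E_k(z)$ and $(z+1)^k$ to $E_k(z+1)$, while the binomial coefficients $\binom{m}{k}\binom{m-k+l}{n-k}$ and $\binom{m}{k}\binom{l}{n-k}$ are unaffected; the exchange of sum and integral is trivially valid because the sums are finite. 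Specializing to $z=0$ then recovers (\ref{eq:pain analytic}), which closes the analytic proof of Pain's formula.

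I do not expect any real obstacle: the only conceptual step is spotting the factorization $(1+x)^l\bigl(1+x(z+1)\bigr)^m$ that both sides of (\ref{eq:general-2}) compute. The identity is in essence a Vandermonde-type convolution carrying an extra parameter $z$, and the Euler polynomial refinement is a formal consequence of the linearity of the sech-integral transform (\ref{eq:Euler integral}).
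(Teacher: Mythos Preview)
Your proof is correct. For the second step (lifting to Euler polynomials via the integral representation (\ref{eq:Euler integral})) you do exactly what the paper does. For the first step, however, your route differs from the paper's. The paper expands each $(z+1)^k$ by the binomial theorem, swaps the order of summation, and then recognizes the inner sum $\sum_k\binom{m-j}{k}\binom{l}{n-k-j}=\binom{m-j+l}{n-j}$ as Chu--Vandermonde; it also sketches an alternative via the linear transformation formula $15.3.6$ for ${}_2F_1$. You instead package both sides as the coefficient of $x^n$ in the same closed-form generating function $(1+x)^l\bigl(1+x(z+1)\bigr)^m$, which proves (\ref{eq:general-2}) in one stroke without isolating the Vandermonde step. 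Your argument is a bit slicker and treats the two sides symmetrically; the paper's argument has the slight expository advantage of naming the classical identity that is doing the work, and its hypergeometric variant situates (\ref{eq:general-2}) as a special case of a known ${}_2F_1$ connection formula. Both are short and elementary; no correction is needed.
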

\begin{proof}
We first proceed to prove the first identity (\ref{eq:general-2}).
Defining 
\[
P\left(z\right)=\sum_{k=0}^{\min\left(m,n\right)}z^{k}\binom{m}{k}\binom{l}{n-k},\thinspace\thinspace Q\left(z\right)=\sum_{k=0}^{\min\left(m,n\right)}z^{k}\binom{m}{k}\binom{m-k+l}{n-k},
\]
we thus need to show that $P\left(z+1\right)=Q\left(z\right).$ Expanding
each $\left(z+1\right)^{k}$ in the expression for $P\left(z+1\right)$
gives
\begin{eqnarray*}
\sum_{k=0}^{\min\left(m,n\right)}\left(z+1\right)^{k}\binom{m}{k}\binom{l}{n-k} & = & \sum_{k=0}^{\min\left(m,n\right)}\sum_{j=0}^{k}\binom{k}{j}z^{j}\binom{m}{k}\binom{l}{n-k}\\
 & = & \sum_{j=0}^{\min\left(m,n\right)}\binom{m}{j}\sum_{k=0}^{m-j}\binom{m-j}{k}\binom{l}{n-k-j}z^{j}.
\end{eqnarray*}
Considering the coefficient of $z^{j}$ in this last expression, we
see that the identity to be proved is
\[
\binom{m-j+l}{n-j}=\sum_{k=0}^{m-j}\binom{m-j}{k}\binom{l}{n-k-j}.
\]
This is known as the {\it Chu-Vandermonde identity}, and is a direct consequence
of the identity
\[
\left(1+t\right)^{m-j+l}=\left(1+t\right)^{m-j}\left(1+t\right)^{l},
\]
by considering the coefficient of $t^{n-j}$ in each side.

Another proof based on hypergeometric functions is as follows: identify
each of the polynomials $P$ and $Q$ as a Gau\ss\,\,hypergeometric
function
\[
P\left(z\right)=\binom{l}{n}\thinspace_{2}F_{1}\left(\begin{array}{c}
-m,-n\\
l+1-n
\end{array};z\right),\thinspace\thinspace\thinspace Q\left(z\right)=\binom{m+l}{n}\thinspace_{2}F_{1}\left(\begin{array}{c}
-m,-n\\
-l-m
\end{array};-z\right).
\]
The desired identity $P\left(z+1\right)=Q\left(z\right)$ is a simple
consequence of the linear transformation formula for the Gau\ss\,\,hypergeometric
function that appears in \cite{Abramowitz} as identity 15.3.6,
\begin{eqnarray*}
\thinspace_{2}F_{1}\left(\begin{array}{c}
a,b\\
c
\end{array};z\right) & = & \frac{\Gamma\left(c\right)\Gamma\left(c-a-b\right)}{\Gamma\left(c-a\right)\Gamma\left(c-b\right)}\thinspace_{2}F_{1}\left(\begin{array}{c}
a,b\\
a+b-c+1
\end{array};1-z\right)\\
 & + & \left(1-z\right)^{c-a-b}\frac{\Gamma\left(c\right)\Gamma\left(a+b-c\right)}{\Gamma\left(a\right)\Gamma\left(b\right)}\thinspace_{2}F_{1}\left(\begin{array}{c}
c-a,c-b\\
c-a-b+1
\end{array};1-z\right)
\end{eqnarray*}
with the choice $a=-m,\thinspace b=-n,\thinspace c=l+1-n;$ note that
the second term in the right-hand side vanishes since we assumed $-n+l+1>0$
and since $\Gamma\left(-n\right)=+\infty.$

The last step of the proof is to transform the general identity (\ref{eq:general-2})
into identity (\ref{eq:general-1-1}) for Euler polynomials; this
is obtained by using the integral representation (\ref{eq:Euler integral})
of Euler polynomials. Replacing $z^{k}$ by $\left(\imath u-\frac{1}{2}+z\right)^{k}$
on each side of (\ref{eq:general-2}) and integrating over $u$ against
the $sech\left(\pi u\right)$ function gives the desired result.
\end{proof}

\subsection{A generalization}

Exploiting the linearity of identities (\ref{eq:pain}) and (\ref{eq:pain3})
allows to extend them to more general functions. We explicit this
generalization in the case of two functions $f\left(p\right)$ and
$g\left(p\right)$ that we assume regular enough to have an integral
representation as Fourier transforms:
\begin{equation}
f\left(p\right)=\int_{\mathbb{R}}\tilde{f}\left(u\right)e^{\imath up}du,\thinspace\thinspace g\left(q\right)=\int_{\mathbb{R}}\tilde{g}\left(v\right)e^{\imath vq}dq.\label{eq:Laplace}
\end{equation}

\begin{thm}
Assume that the functions $f$ and $g$ have the Fourier integral
representation (\ref{eq:Laplace}) for the operators $p$ and $q$
such that $\left[p,q\right]=c.$ Denote moreover 
\[
F\left(p\right)=\int_{\mathbb{R}}\frac{\tilde{f}\left(u\right)}{\imath u}e^{\imath up}du
\]
an antiderivative of $f$ and accordingly $G\left(p\right)$ for an antiderivative
of $g.$ 

Then we have
\[
\left[f\left(p\right),g\left(q\right)\right]=-\sum_{k\ge1}E_{k}\left(0\right)\frac{c^{k}}{k!}\left\{ f^{\left(k\right)}\left(p\right),g^{\left(k\right)}\left(q\right)\right\} 
\]
and
\[
\left\{ f\left(p\right),g\left(q\right)\right\} =\frac{2}{c}\left[F\left(p\right),G\left(q\right)\right]+2\sum_{k\ge1}\frac{B_{k+1}}{k+1}\frac{c^{k}}{k!}\left[f^{\left(k\right)}\left(p\right),g^{\left(k\right)}\left(q\right)\right].
\]
Additionally to McCoy's identity \cite[Formula (14)]{McCoy}
\[
\left[f\left(p\right),g\left(q\right)\right]=-\sum_{k\ge1}\frac{\left(-c\right)^{k}}{k!}f^{\left(k\right)}\left(p\right)g^{\left(k\right)}\left(q\right),
\]
we have the obvious
\[
\left\{ f\left(p\right),g\left(q\right)\right\} =2f\left(p\right)g\left(q\right)+\sum_{k\ge1}\frac{\left(-c\right)^{k}}{k!}f^{\left(k\right)}\left(p\right)g^{\left(k\right)}\left(q\right)
\]
and the two last cases are
\[
f\left(p\right)g\left(q\right)=\frac{1}{c}\left[F\left(p\right),G\left(q\right)\right]-\sum_{k\ge0}\frac{B_{k+1}}{k+1}\frac{\left(-c\right)^{k}}{k!}\left[f^{\left(k\right)}\left(p\right),g^{\left(k\right)}\left(q\right)\right]
\]
and
\[
f\left(p\right)g\left(q\right)=\frac{1}{2}\sum_{k\ge0}\frac{E_{k}\left(0\right)}{k!}\left(-c\right)^{k}\left\{ f^{\left(k\right)}\left(p\right),g^{\left(k\right)}\left(q\right)\right\} .
\]
\end{thm}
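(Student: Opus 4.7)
The strategy is to lift the exponential identities \eqref{eq:eupevqcommutator} and \eqref{eq:eupevqanticommutator} to arbitrary $f(p),g(q)$ by integrating them against the Fourier weights $\tilde f(u),\tilde g(v)$. Setting $\alpha=\imath u$ and $\beta=\imath v$ in the generic relations $[e^{\alpha p},e^{\beta q}]=(1-e^{-\alpha\beta c})e^{\alpha p}e^{\beta q}$ and $\{e^{\alpha p},e^{\beta q}\}=(1+e^{-\alpha\beta c})e^{\alpha p}e^{\beta q}$ replaces the scalar multiplier by a function of the single variable $w:=cuv$; since all six formulas in the statement are linear in $f(p),g(q),F(p),G(q)$, it then suffices to prove each corresponding scalar identity for these generating functions and integrate term by term.

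For each of the six identities the plan is the same. First, I would solve the two exponential relations for the relevant operator quantity, so that it appears as a scalar rational function of $e^w$ multiplied by one of the other two quantities (or by $e^{\imath up}e^{\imath vq}$ itself in the $f(p)g(q)$ formulas). Second, I would expand that rational function as a (Laurent) series in $w$: for the first identity, use the Euler generating function $(e^w-1)/(e^w+1)=-\sum_{k\geq 1}E_k(0)w^k/k!$, exactly as in the previous subsection; for the second, invert it and read off the Bernoulli tail from $z/(e^z-1)=\sum B_n z^n/n!$, which produces a $2/w$ pole plus a $B_{k+1}/(k+1)$ series; for McCoy's identity and its anticommutator companion, use the naive expansion of $1\mp e^w$; and for the two $f(p)g(q)$ formulas, expand $1/(1-e^w)$ and $1/(1+e^w)$ via the Bernoulli and Euler generating functions respectively. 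Third, I would multiply by $\tilde f(u)\tilde g(v)$ and integrate over $(u,v)\in\mathbb{R}^2$, using the two dictionaries $\int \tilde f(u)u^k e^{\imath up}\,du=\imath^{-k}f^{(k)}(p)$ (so that the $v$-integral combines to give $\int\!\int \tilde f\,\tilde g\,u^kv^k\,e^{\imath up}e^{\imath vq}\,du\,dv=(-1)^k f^{(k)}(p)g^{(k)}(q)$, and analogously for commutators and anticommutators) and $\tilde f(u)/(\imath u)\leftrightarrow F(p)$, which handles the $1/(uv)$ pole terms and yields $-[F(p),G(q)]$.

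The main bookkeeping challenge is the careful accounting of the $\imath$-factors. The Fourier substitution turns $z=\alpha\beta c$ into $-w$, which flips signs in the exponent of the scalar multiplier; each derivative pair contributes $\imath^{-2k}=(-1)^k$ from the dictionary above, and the antiderivative rule contributes $\imath^{-2}=-1$ for the $1/(uv)$ pole. Combined with the powers $c^k$ coming out of the expansions, these sign factors reassemble into precisely the $c^k$ or $(-c)^k$ alternations required by the six formulas, with the antiderivative terms $[F(p),G(q)]$ appearing with the correct prefactor $\pm 1/c$; once matched up for the first identity, the remaining five follow by the same mechanism with a different scalar expansion substituted in step two.
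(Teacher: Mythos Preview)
Your proposal is correct and follows essentially the same route as the paper: both integrate the exponential identities \eqref{eq:eupevqcommutator}--\eqref{eq:eupevqanticommutator} against the Fourier weights $\tilde f(u)\tilde g(v)$, expand the resulting scalar multiplier in $w=cuv$ using the Euler/Bernoulli generating functions, and translate powers $u^k,v^k$ into derivatives via $\int u^k\tilde f(u)e^{\imath up}\,du=(-\imath)^k f^{(k)}(p)$. The paper carries this out in detail only for the first identity and declares that the others ``follow the same pattern''; your outline is slightly more explicit in that it also spells out how the $1/(uv)$ pole produces the antiderivative term $[F(p),G(q)]$, but the underlying mechanism is identical.
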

\begin{proof}
We give the proof of the first identity only, the other proofs follow
the same pattern. 

The commutator $\left[f\left(p\right),g\left(q\right)\right]$ is
computed from the integral representations (\ref{eq:Laplace}) as
\[
\iint_{\mathbb{R}^{2}}\tilde{f}\left(u\right)\tilde{g}\left(v\right)\left[e^{\imath up},e^{\imath vq}\right]dudv=\iint_{\mathbb{R}^{2}}\tilde{f}\left(u\right)\tilde{g}\left(v\right)\frac{e^{-z}-1}{e^{-z}+1}\left\{ e^{\imath up},e^{\imath vq}\right\} dudv
\]
where we have used (\ref{eq:comm anticomm}) replacing $\left(u,v\right)$
by $\left(\imath u,\imath v\right)$ and remarking that $z=cuv$ is
changed to $-z$. Now writing
\[
\frac{e^{-z}-1}{e^{-z}+1}=\frac{1-e^{z}}{1+e^{z}}=1-\frac{2e^{z}}{e^{z}+1}=1-\sum_{k\ge0}\frac{E_{k}\left(1\right)}{k!}z^{k},
\]
we deduce
\begin{eqnarray*}
\iint_{\mathbb{R}^{2}}\tilde{f}\left(u\right)\tilde{g}\left(v\right)\left[e^{\imath up},e^{\imath vq}\right]dudv & = & \iint_{\mathbb{R}^{2}}\tilde{f}\left(u\right)\tilde{g}\left(v\right)\left(1-\sum\frac{E_{k}\left(1\right)}{k!}z^{k}\right)\left\{ e^{\imath up},e^{\imath vq}\right\} dudv\\
 & = & \iint_{\mathbb{R}^{2}}\tilde{f}\left(u\right)\tilde{g}\left(v\right)\left\{ e^{\imath up},e^{\imath vq}\right\} dudv\\
 & - & \sum_{k\ge0}\frac{E_{k}\left(1\right)c^{k}}{k!}\iint_{\mathbb{R}^{2}}u^{k}\tilde{f}\left(u\right)v^{k}\tilde{g}\left(v\right)\left\{ e^{\imath up},e^{\imath vq}\right\} dudv.
\end{eqnarray*}
The first right-hand side term is simply the anti-commutator $\left\{ f\left(p\right),g\left(q\right)\right\} .$
Each integral in the second term is identified as
\[
\int_{\mathbb{R}}u^{k}\tilde{f}\left(u\right)e^{\imath up}du=\left(-\imath\right)^{k}f^{\left(k\right)}\left(p\right),
\]
and the same for $g.$ We deduce the second right-hand side term as
\[
\sum_{k\ge0}\frac{E_{k}\left(1\right)\left(-c\right)^{k}}{k!}\left\{ f^{\left(k\right)}\left(p\right),g^{\left(k\right)}\left(q\right)\right\} .
\]
Hence the right-hand side reads
\[
\left\{ f\left(p\right),g\left(q\right)\right\} -\sum_{k\ge0}\frac{E_{k}\left(1\right)\left(-c\right)^{k}}{k!}\left\{ f^{\left(k\right)}\left(p\right),g^{\left(k\right)}\left(q\right)\right\} =-\sum_{k\ge1}\frac{E_{k}\left(1\right)\left(-c\right)^{k}}{k!}\left\{ f^{\left(k\right)}\left(p\right),g^{\left(k\right)}\left(q\right)\right\} .
\]
The final result is obtained by remarking that $E_{k}\left(1\right)=\left(-1\right)^{k}E_{k}\left(0\right),\thinspace\thinspace k\ge0.$
\end{proof}

\section{non-Hermitian Hamiltonian systems}

\subsection{Figuieira de Morisson and Fring's results}

Another context in which the Euler numbers appear naturally is the
study of non-Hermitian Hamiltonian systems: in \cite{Figueira}, Figuieira
de Morisson and Fring consider a system ruled by an Hamiltonian $H$
that is not Hermitian but PT-symmetric, hence pseudo-Hermitian \cite{Mostafazadeh};
it is assumed that $H$ is similar to the Hermitian Hamiltonian $h,$
and that the similarity transformation can be realized under the form
\begin{equation}
h=e^{\frac{q}{2}}He^{-\frac{q}{2}}\label{eq:hq}
\end{equation}
for some Hermitian operator $q$. Writing the non-Hermitian Hamiltonian
\[
H=h_{0}+\imath h_{1}
\]
where $h_{0}^{\dagger}=h_{0}$ and $h_{1}^{\dagger}=h_{1}$, the pseudo-Hermitianity
condition
\[
H^{\dagger}=e^{q}He^{-q}
\]
is expressed as
\[
e^{q}\left(h_{0}-\imath h_{1}\right)e^{-q}=h_{0}+\imath h_{1}
\]
or equivalently as 
\begin{equation}
h_{0}-e^{q}h_{0}e^{-q}=\imath\left(h_{1}+e^{q}h_{1}e^{-q}\right).\label{eq:h0h1}
\end{equation}
This equation is solved in $h_{1}$ as a function of $h_{0}$ in \cite{Figueira},
looking for coefficients $\kappa_{n}$ such that
\begin{equation}
h_{1}=\imath\sum_{n=1}\frac{\kappa_{n}}{n!}\left[h_{0},q\right]_{n}.\label{eq:kappa}
\end{equation}
In order to comply with the notations in \cite{Figueira}, note that
in this section, the nested commutator in (\ref{eq:kappa}) is now
iterated from the left:
\[
\left[h_{0},q\right]_{n}=\left[q,\left[q,\dots,\left[q,h_{0}\right]\right]\right].
\]
It is found in \cite{Figueira} that all even-index $\kappa_{2n}$
numbers vanish and that
\[
\kappa_{1}=\frac{1}{2},\thinspace\thinspace\kappa_{3}=-\frac{1}{4},\thinspace\thinspace\kappa_{5}=\frac{1}{2},\thinspace\thinspace\kappa_{7}=-\frac{17}{8},\thinspace\thinspace\kappa_{9}=\frac{31}{2}...
\]
(note the sign error on $\kappa_{9}$ in \cite{Figueira}). Then $h$
is computed using (\ref{eq:hq}) and it is found that
\begin{equation}
h=\sum_{n\ge0}\frac{\lambda_{n}}{2^{n}n!}\left[h_{0},q\right]_{n}\label{eq:lambda}
\end{equation}
where the coefficients $\lambda_{n}$ are related to the $\kappa_{n}$
by
\[
\lambda_{n}=1-\sum_{m=0}^{n}2^{m}\binom{n}{m}\kappa_{m}.
\]
These coefficients are identified as $\lambda_{2n+1}=0$ and
\[
\lambda_{0}=1,\thinspace\thinspace\lambda_{2}=-1,\thinspace\thinspace\lambda_{4}=5,\thinspace\thinspace\lambda_{6}=-61...
\]
from where it is concluded (without proof) that
\[
\lambda_{2n}=2^{n}E_{n}\left(\frac{1}{2}\right).
\]

\subsection{A proof and why Euler numbers appear here}

In order to prove these results, we'll use a new method based on symbolic
computation in the spirit of the classical umbral calculus as introduced
in \cite{roman}, see also \cite{Gessel}. We define an Euler symbol
$\mathcal{E}$ such that
\[
\mathcal{E}^{n}=E_{n}\left(0\right);
\]
the principle of umbral calculus is to perform the computations replacing
the sequence of numbers $E_{n}$ by the powers $\mathcal{E}^{n}$
and to obtain the final result by the reverse substitution of each
$\mathcal{E}^{n}$ by its counterpart $E_{n}.$

For example,
\[
e^{z\mathcal{E}}=\sum_{n\ge0}\frac{\mathcal{E}^{n}}{n!}z^{n}=\sum_{n\ge0}\frac{E_{n}\left(0\right)}{n!}z^{n}=\frac{2}{e^{z}+1}
\]
and more generally
\begin{equation}
e^{z\left(\mathcal{E}+x\right)}=\sum_{n\ge0}\frac{\left(\mathcal{E}+x\right)^{n}}{n!}z^{n}=\sum_{n\ge0}\frac{E_{n}\left(x\right)}{n!}z^{n}=\frac{2e^{zx}}{e^{z}+1}\label{eq:Eulergf}
\end{equation}
so that $\mathcal{E}+x$ is the symbol for the Euler polynomial, in
the sense that
\[
\left(\mathcal{E}+x\right)^{n}=E_{n}\left(x\right).
\]
The fundamental property of Euler symbols that explains their appearance
in this context is the following extension of the identity (\ref{eq:Eulermean}).
\begin{lem}
For any analytic function $f,$
\begin{equation}
f\left(z+\mathcal{E}\right)+f\left(z+\mathcal{E}+1\right)=2f\left(z\right);\label{eq:Euler fundamental}
\end{equation}
\end{lem}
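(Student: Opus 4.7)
The plan is to reduce the identity to the scalar identity $E_n(z) + E_n(z+1) = 2z^n$ (equation (\ref{eq:Eulermean})) by expanding $f$ in a Taylor series and using the defining umbral rule $\mathcal{E}^n = E_n(0)$ together with linearity of the umbral substitution.

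First, I would write $f$ as its Taylor series around $0$:
\[
f(w) = \sum_{k\ge0} \frac{f^{(k)}(0)}{k!}\, w^{k},
\]
so that, formally applying the substitution rule $\mathcal{E}^{n}\mapsto E_{n}(0)$ term by term, one has
\[
f(z+\mathcal{E}) = \sum_{k\ge0} \frac{f^{(k)}(0)}{k!}\, (z+\mathcal{E})^{k}.
\]
Second, I would observe that the binomial expansion
\[
(z+\mathcal{E})^{k} = \sum_{j=0}^{k}\binom{k}{j} z^{k-j}\mathcal{E}^{j} = \sum_{j=0}^{k}\binom{k}{j} z^{k-j} E_{j}(0) = E_{k}(z),
\]
where the last equality is the Appell property of the Euler polynomials (which follows directly from the generating function (\ref{eq:EulerGF})). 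Thus the umbral series telescopes into
\[
f(z+\mathcal{E}) = \sum_{k\ge0}\frac{f^{(k)}(0)}{k!}\, E_{k}(z),
\]
and similarly $f(z+\mathcal{E}+1) = \sum_{k\ge0}\frac{f^{(k)}(0)}{k!}\, E_{k}(z+1)$.

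Third, I would add the two series and apply (\ref{eq:Eulermean}) termwise:
\[
f(z+\mathcal{E}) + f(z+\mathcal{E}+1) = \sum_{k\ge0}\frac{f^{(k)}(0)}{k!}\bigl(E_{k}(z)+E_{k}(z+1)\bigr) = \sum_{k\ge0}\frac{f^{(k)}(0)}{k!}\, 2z^{k} = 2f(z).
\]
The only genuinely delicate point, and the step I expect to require the analyticity hypothesis, is justifying the interchange of summation and umbral substitution, i.e.\ ensuring that the series $\sum_k \frac{f^{(k)}(0)}{k!} E_k(z)$ converges in the domain of interest. Since $E_{k}(z)$ grows only polynomially in $k$ for fixed $z$ (it is a degree $k$ polynomial whose coefficients are bounded by $E_{k}(0)$, and $\sum E_{k}(0) z^k/k! = 2/(e^z+1)$ has radius of convergence $\pi$), the analyticity of $f$ in a neighborhood of $z$ suffices; a brief comparison with the generating function (\ref{eq:Eulergf}) gives convergence, and the manipulation above is then rigorous.
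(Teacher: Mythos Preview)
Your proof is correct and follows essentially the same approach as the paper: both reduce the identity by linearity to the monomial case $f(w)=w^{n}$, where it becomes exactly $E_{n}(z)+E_{n}(z+1)=2z^{n}$ from (\ref{eq:Eulermean}). The paper's proof is just that one sentence, treating the statement as a formal umbral identity and omitting the convergence discussion you add; your explicit Taylor expansion and the verification $(z+\mathcal{E})^{k}=E_{k}(z)$ simply unpack what the paper leaves implicit (the latter is stated just before the lemma).
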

\begin{proof}
By linearity, this result needs only be proved for monomials $f\left(z\right)=z^{n}:$
it then reduces to identity (\ref{eq:Eulermean}).
\end{proof}
We are now in position to prove the following results.
\begin{thm}
The equation (\ref{eq:h0h1}) has a solution
\begin{equation}
h_{1}=\imath h_{0}-\imath\sum_{n\ge0}\frac{E_{n}\left(0\right)}{n!}\left[q,h_{0}\right]_{n}\label{eq:h1}
\end{equation}
The corresponding value of the Hamiltonian $h$ is
\begin{equation}
h=\sum_{n\ge0}\frac{E_{n}\left(\frac{1}{2}\right)}{n!}\left[q,h_{0}\right]_{n}.\label{eq:h}
\end{equation}
Hence the coefficients $\lambda_{n}$ in (\ref{eq:lambda}) and $\kappa_{n}$
in (\ref{eq:kappa}) are respectively equal to
\[
\kappa_{n}=\begin{cases}
0, & n=0\\
-E_{n}\left(0\right), & n>0
\end{cases}
\]
and
\[
\lambda_{n}=2^{n}E_{n}\left(\frac{1}{2}\right),\thinspace\thinspace n\ge0.
\]
\end{thm}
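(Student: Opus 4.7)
The plan is to recast the problem using the adjoint superoperator $A := \operatorname{ad}_q$ defined by $A X = [q, X]$. Since $e^{zA} h_0 = e^{zq} h_0 e^{-zq} = \sum_{n \ge 0} \frac{z^n}{n!}[q,h_0]_n$, the pseudo-Hermiticity condition (\ref{eq:h0h1}) collapses to the single superoperator identity
\[
(1 - e^A)\, h_0 \;=\; \imath\, (1 + e^A)\, h_1.
\]
Solving this formally requires inverting $1 + e^A$, and here the Euler polynomials enter: the formal series $g(A) := \sum_{n \ge 0} \frac{E_n(0)}{n!} A^n$ equals $\frac{2}{e^A + 1}$ by (\ref{eq:EulerGF}) at $x = 0$, so that $(1 + e^A)\,g(A) = 2$. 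Umbrally, this is the $z = 0$ case of the fundamental identity (\ref{eq:Euler fundamental}), namely $\mathcal{E}^n + (\mathcal{E} + 1)^n = 2\,\delta_{n,0}$.

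With $g(A)$ in hand, I would take the ansatz $h_1 = \imath h_0 - \imath\, g(A)\, h_0$ and verify (\ref{eq:h0h1}) by a one-line computation:
\[
\imath(1 + e^A)\, h_1 \;=\; -(1 + e^A)\, h_0 + (1 + e^A)\, g(A)\, h_0 \;=\; -(1 + e^A)\, h_0 + 2 h_0 \;=\; (1 - e^A)\, h_0.
\]
Expanding $g(A) h_0$ as a power series in $A$ recovers (\ref{eq:h1}). To obtain $h$, apply $e^{A/2}$ (that is, conjugate by $e^{q/2}$) to $H = h_0 + \imath h_1$; the bare $h_0$ contributions cancel, leaving
\[
h \;=\; e^{A/2}\, g(A)\, h_0 \;=\; \frac{2\, e^{A/2}}{e^A + 1}\, h_0,
\]
which by (\ref{eq:EulerGF}) at $x = 1/2$ equals $\sum_{n \ge 0} \frac{E_n(1/2)}{n!}[q,h_0]_n$, establishing (\ref{eq:h}). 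The identification $\kappa_n = -E_n(0)$ for $n \ge 1$ then follows from (\ref{eq:kappa}) after observing that the $n = 0$ term $\imath E_0(0) h_0 = \imath h_0$ in (\ref{eq:h1}) cancels the bare $\imath h_0$; matching coefficients in (\ref{eq:lambda}) gives $\lambda_n = 2^n E_n(1/2)$.

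The main obstacle is purely formal: the expressions $e^A$, $g(A)$ and $(1 + e^A)^{-1}$ are formal power series in the superoperator $A$, and every manipulation above must be interpreted coefficient-wise in $A^n h_0$. This causes no real difficulty, since on each side of every displayed equation the coefficient of $A^n h_0$ is a finite sum, and the key identity $(1 + e^A)\, g(A) = 2$ reduces termwise to the standard Euler relation $E_n(0) + E_n(1) = 2\,\delta_{n,0}$. Alternatively, the whole computation can be phrased umbrally through the Euler symbol $\mathcal{E}$ and (\ref{eq:Euler fundamental}), sidestepping any infinite series in superoperators altogether.
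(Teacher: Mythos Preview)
Your proposal is correct and is essentially the paper's own proof, only in a different notation: where the paper writes $e^{q\mathcal{E}}h_0 e^{-q\mathcal{E}}$ via the umbral Euler symbol, you write $g(A)h_0$ with $A=\operatorname{ad}_q$, and your key identity $(1+e^A)g(A)=2$ is precisely the $z=0$ case of (\ref{eq:Euler fundamental}) applied to $f(\mathcal{E})=e^{q\mathcal{E}}h_0 e^{-q\mathcal{E}}$. You even note this equivalence yourself at the end, so the two arguments differ only in packaging.
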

\begin{proof}
Rewrite first the right-hand side of (\ref{eq:h1}) as
\[
h_{1}=\imath h_{0}-\imath e^{q\mathcal{E}}h_{0}e^{-q\mathcal{E}};
\]
deduce
\begin{eqnarray*}
h_{1}+e^{q}h_{1}e^{-q} & = & \imath h_{0}-\imath e^{q\mathcal{E}}h_{0}e^{-q\mathcal{E}}+e^{q}\left(\imath h_{0}-\imath e^{q\mathcal{E}}h_{0}e^{-q\mathcal{E}}\right)e^{-q}\\
 & = & \imath\left(h_{0}+e^{q}h_{0}e^{-q}\right)-\imath\left(e^{q\mathcal{E}}h_{0}e^{-q\mathcal{E}}+e^{q\left(\mathcal{E}+1\right)}h_{0}e^{-q\left(\mathcal{E}+1\right)}\right).
\end{eqnarray*}
Applying (\ref{eq:Euler fundamental}) with $z=0$ and the function
$f\left(\mathcal{E}\right)=e^{q\mathcal{E}}h_{0}e^{-q\mathcal{E}}$
reduces the second term in the right-hand side to $2h_{0}$ so that
\[
h_{1}+e^{q}h_{1}e^{-q}=\imath\left(-h_{0}+e^{q}h_{0}e^{-q}\right)
\]
and $h_{1}$ is a solution to (\ref{eq:h0h1}). The corresponding
value of $h$ is
\begin{eqnarray*}
h & = & e^{\frac{q}{2}}\left(h_{0}+\imath h_{1}\right)e^{-\frac{q}{2}}=e^{\frac{q}{2}}\left(h_{0}-h_{0}+e^{q\mathcal{E}}h_{0}e^{-q\mathcal{E}}\right)e^{-\frac{q}{2}}\\
 & = & e^{q\left(\mathcal{E}+\frac{1}{2}\right)}h_{0}e^{-q\left(\mathcal{E}+\frac{1}{2}\right)}.
\end{eqnarray*}
Expanding the exponentials in the formula gives the result (\ref{eq:h}).
\end{proof}

\section{Conclusion}

We have studied three identities on commutators and anti-commutators
in which the Euler and Bernoulli numbers play a fundamental role.
An open problem and desirable result at this point would be to link
these identities with the combinatorial meaning of the Euler numbers:
see \cite{Hodges} for a possible application.


\begin{thebibliography}{10}
\bibitem{Flajolet}P. Blasiak and P. Flajolet. Combinatorial models
of creation-annihilation, Séminaire Lotharingien de Combinatoire,
65.B65c, 1-78, 2011

\bibitem{Bender}C. M. Bender and Luis M.A. Bettencourt, Multiple-scale
analysis of quantum systems, Phys. Rev. D, 54-12, 7710-7723, 1996

\bibitem{Pain}J.-C. Pain, Commutation relations of operator monomials,
J. Phys. A: Math. Theor. 46, 035304, 2013

\bibitem{McCoy}N. H. McCoy, On Commutation Formulas in the Algebra
of Quantum Mechanic, Transactions of the American Mathematical Society,
3-4, 793-806, Oct. 1929

\bibitem{Galapon}H. B. Domingo and E. A. Galapon, Generalized Weyl
transform for operator ordering: Polynomial functions in phase space,
Journal of Mathematical Physics 56, 022104, 2015

\bibitem{Abramowitz}M. Abramowitz and I.A. Stegun, Handbook of mathematical
functions: with formulas, graphs, and mathematical tables. New York:
Dover Publications, 1970

\bibitem{Elizalde}E. Elizalde, Ten physical applications of spectral
zeta functions, Springer, 1995

\bibitem{Nist}F. W. Olver, D. W. Lozier, R. F. Boisvert, and C. W.
Clark, NIST Handbook of Mathematical Functions (1st ed.). Cambridge
University Press, New York, NY, USA, 2010.

\bibitem{Figueira}C. Figuieira de Morisson and A. Fring, Time evolution
of non-Hermitian Hamiltonian systems, J. Phys. A: Math. Gen. 39 9269,
2006

\bibitem{Hodges}A. Hodges and C.V. Sukumar, Bernoulli, Euler, permutations
and quantum algebras, Proc. R. Soc. A, 463, 2401-2412, 2007

\bibitem{Mostafazadeh}A. Mostafazadeh, Pseudo-Hermiticity versus
PT-Symmetry: the necessary condition for the reality of the spectrum
of a non-Hermitian Hamiltonian, J. Math. Phys. 43, 205-214, 2002

\bibitem{roman}S. Roman, The umbral calculus, Academic Press, 1984

\bibitem{Gessel}I. M. Gessel, Applications of the classical umbral
calculus, Algebra Universalis, Volume 49, Issue 4, pp 397-434, October
2003\end{thebibliography}
\end{document}